\setlist{nolistsep}
\newtheorem*{proposition*}{Proposition}
\newtheorem*{theorem*}{Theorem}
\newtheorem{assumption}{Assumption}
\newtheorem{lemma}{Lemma}
\newtheorem{theorem}{Theorem}
\newtheorem{definition}{Definition}
\newtheorem{problem}{Problem}
\newtheorem{remark}{Remark}
\newtheorem{proposition}{Proposition}
\definecolor{green1}{rgb}{0.2,0.7,0.2}
\newcommand{\PP}{\mathbb{P}}
\newcommand{\Os}{\mathcal{O}}
\title{\LARGE \bf Large Population Games on Constrained Unreliable Networks}
\author{Shubham~Aggarwal, Muhammad~Aneeq~uz~Zaman, Melih Bastopcu, \textit{Member, IEEE}, and Tamer~Ba{\c s}ar, \\ \textit{Life Fellow, IEEE}
\thanks{Research of the authors was supported in part by the ARO MURI Grant AG285 and in part by the AFOSR Grant FA9550-19-1-0353.}
% \thanks{Shubham Aggarwal and Muhammad Aneeq uz Zaman are graduate students at the Coordinated Science Laboratory %and the Department of Mechanical Science and Engineering 
% % at the University of Illinois at Urbana-Champaign (UIUC); 
% at UIUC; Melih Bastopcu is Postdoctoral Research Associate at the Coordinated Science Laboratory at UIUC;  Tamer Ba{\c s}ar is Research Professor at the Coordinated Science Laboratory at UIUC.%and the Department of Electrical and Computer Engineering at UIUC. 
% (Emails:\texttt{sa57, mazaman2, bastopcu, basar1@illinois.edu})
%         % {sa57@illinois.edu, mazaman2@illinois.edu, bastopcu@illinois.edu, basar1@illinois.edu)} % <-this % stops a space
% %
% }
\thanks{The authors are affiliated with the Coordinated Science Lab, University of Illinois at Urbana-Champaign, Urbana, IL, USA 61801. Emails:
        {\tt\small \{sa57,mazaman2,bastopcu,basar1\}@illinois.edu}.}%
}
\tikzset{ remember picture,
   switch/.style = {rectangle,
                    draw,align=center,
                    label={below:#1},
   },
}
\newsavebox\mybox
\savebox\mybox{%
\tikz\draw[line width=0.7pt] (-0.4,0)--(0,0)
								(0,0)--(0.4,0.4);%
}
\begin{document}
\tikzstyle{rect} = [draw,rectangle,fill = white!20,minimum width = 3pt, inner sep  = 5pt]
\tikzstyle{line} = [draw, -latex]
\tikzstyle{dline} = [draw, dash dot, -latex]

\maketitle
\thispagestyle{empty}

\begin{abstract}
This paper studies an $N$--agent cost-coupled game where the agents are connected via an \emph{unreliable capacity constrained} network. %addresses the effect of scheduling agents over unreliable networks on the Nash equilibrium in multi-agent dynamic games. 
Each agent receives state information over that network which loses packets with probability $p$. A Base station (BS) actively schedules agent communications over the network by minimizing a weighted Age of Information (WAoI) based cost function under a capacity limit $\mathcal{C} < N$ on the number of transmission attempts at each instant. Under a standard information structure, we show that the problem can be decoupled into a \emph{scheduling problem} for the BS and a \emph{game problem} for the $N$ agents. Since the scheduling problem is an NP hard combinatorics problem, we propose an approximately optimal solution which approaches the optimal solution as $N \rightarrow \infty$. In the process, we also provide some insights on the case without channel erasure. Next, to solve the large population game problem, we use the mean-field game framework to compute an approximate decentralized Nash equilibrium. %by employing the obtained scheduling policy. 
Finally, we validate the theoretical results using a numerical example.
\end{abstract}

\section{Introduction}
With the phenomenal expansion in data-traffic galvanized by the growing number of connected devices, Internet-of-Things (IoT) finds applications in diverse areas such as smart grids, autonomous vehicles, and monitoring systems \cite{nanda2019internet,rana2017distributed,osseiran20165g}, to name a few. A commonality among all of the above is the presence of distributed sensing and actuating devices communicating via a wireless network. While distributed systems can efficiently handle the growing network size compared to their centralized counterparts, they come with added challenges, such as limited channel capacities, network unreliability, and scalability concerns. These constraints might cause end-to-end latency or in a worse case, missing information at the end-user, which can lead to compromised reliability in safety-critical applications. Thus, there is an urgent need for the development of dependable and timeliness-aware communication technologies with the potential to mitigate the above posed concerns. In this work, we aim to propose strategies to mitigate the deleterious effects of unreliable capacity-constrained communication in networks involving a large number of decision-making agents.

Specifically, we consider a large population setting where $N$ rational agents aim to form consensus while communicating intermittently over a network. This intermittency is caused by i) a capacity-constrained downlink connecting the BS to the decoders, and ii) the possibility of \emph{erasure} amidst transmission, after information is relayed by the BS. This results in an unreliable capacity-constrained network. As a result, the agents must maintain an estimate of their state to consequently compute control actions that can achieve consensus. Meanwhile, the BS, which  is tasked with the scheduling of information, must carefully design policies to account for the heterogeneity in agent dynamics whilst also dealing with the possibility of erasure of the scheduled information. We formulate the BS's problem by proposing a Weighted Age of Information (WAoI) based cost function which is monotonically increasing in the average estimation error of the agents, thereby extending the setting of our earlier work \cite{aggarwal2022weighted} to erasure channels. Further, we improve upon the convergence guarantees in \cite{aggarwal2022weighted} for the case where the network is erasure free by proposing a novel scheduling policy. Finally, we employ this policy to construct an approximate Nash solution for the finite-agent consensus problem. 

In literature, the early works \cite{imer2006optimal, schenato2007foundations} have dealt with an optimal control problem with unreliable communication, albeit, for a single agent system and an unconstrained network under the TCP and the UDP communication protocols. The work \cite{moon2014discrete} extends the setting to multi-agent games; however, the considered network is unconstrained. In order to measure timeliness in communication networks, age of information (AoI) has been introduced as a potential metric. In the context of networked feedback systems, the AoI-based policies have been proposed for solving resource allocation and end-user uncertainty reduction problems as in \cite{Ayan2020}. %Additionally, it has been extended to the setting age of incorrect information (AoII) for solving multi-agent remote state estimation problems \cite{maatouk2020age1}. 
Recently, age of incorrect information (AoII) is proposed for solving multi-agent remote state estimation problems \cite{maatouk2020age1}. Age-optimal scheduling policies have been considered with Markovian error-prone channel state in \cite{chen2021optimizing, sombabu2022whittle}, with unknown erasure probabilities in \cite{Wu2022}, and over erroneous broadcast channels in \cite{Kadota18a}. %Age-optimal scheduling policies have been studied in \cite{chen2021optimizing} by utilizing Markov decision processes (MDP) for a Markovian error-prone channel state, in \cite{sombabu2022whittle}  by using Whittle's Index based scheduling for the Gilbert-Elliot channel, in \cite{Wu2022} by using online Whittle's index based policy over channels with unknown erasure probabilities, and finally in \cite{Kadota18a} by developing low-complexity scheduling policies for a broadcast channel over erasure channels.
A more detailed literature review on age-optimal scheduling policies can be found in \cite{Yates20a}.   

To appropriately handle the concerns of increasing network interactions, one of the most relevant framework is that of mean-field games (MFGs) \cite{huang2007large,lasry2007mean}. It leads one to circumvent the issues posed by scalability, by allowing for a representative agent to play against the population, although, at the cost of entailing an approximate equilibrium solution to the finite-agent consensus problem. It has been well-studied in the regime of linear-quadratic systems \cite{zaman2022reinforcement,aggarwal2022linear,bensoussan2016linear} and holds great potential to solve problems involving ultradense networks or massive machine-type communication \cite{zhang2021age,bennis2018ultrareliable}. 
For additional literature on large multi-agent systems with networked communication, we refer the reader to \cite{aggarwal2022weighted}.

We list below the main contributions of this paper. %First, we show that under standard information structure, the $(N+1)$--player problem ($N$ agents and a BS) can be separated into (a) a scheduling problem for the BS, and (b) a consensus game problem for the $N$--agents. 
We extend the setting of our previous works \cite{aggarwal2023ACC,aggarwal2022weighted} to the case of unreliable downlink communication. Since the scheduling problem belongs to the class of restless multi-armed bandits, for which an optimal policy is hard to compute, we propose a novel suboptimal maximum age-based tie-breaking protocol (MATB-P) to solve the capacity-constrained scheduling problem of the BS (which is also different from the uniform sampling-based policy considered in \cite{aggarwal2022weighted}). We prove that this policy approaches optimality (exponentially fast) as $N$ grows large, in contrast to the $\mathcal{O}(N^{-0.5})$ rate proposed in \cite{aggarwal2022weighted}. Further, we also provide high-probability guarantees on the tail of the AoI, which, in turn, provides guarantees on the freshness of information under high traffic. Additionally, in the special case with no channel erasure, we relax the assumption on the $A$ matrix in the work \cite{aggarwal2022weighted} by proving a uniform upper bound on the AoI of all agents under MATB-P. Finally, using the policy constructed above, we solve the $N$--agent consensus problem by leveraging the MFG paradigm and getting $\epsilon$--Nash policies for the agents where $\epsilon\xrightarrow{N \rightarrow\infty}0$.

The rest of the paper is organized as follows. We formulate the $(N+1)$--player game problem in Sec. \ref{sec:formulation}. In Sec.\ref{Sol:BSlevel}, we solve the BS-level scheduling problem, and provide its analysis in Sec. \ref{sec:Large_dev}. Then, we solve the agent-level game problem in Sec. \ref{sec:MFG_summ}, and provide a numerical example in Sec. \ref{sec:example}. The paper is concluded in Sec. \ref{sec:conclusion} with some major highlights, followed by four appendices, providing detailed derivations and proofs.

\textbf{Notations:} We let $[N]:= \{1,2, \cdots, N\}$ and $tr(\cdot)$ denote the trace of its argument matrix. The Euclidean 2-norm and the Frobenius norm are denoted by $\|\cdot\|$ and $\|\cdot\|_F$, respecively. All the empty summations are set to 0. For a vector $x$ and a positive semi-definite matrix $Q$, $\|x\|^2_Q:=x^\top Qx$. We define the limit superior of a real sequence as $\overline{\lim}:=\limsup$. %For real functions, $\operatorname{F}(\alpha)=\mathcal{O}(\operatorname{G}(\alpha))$ is equivalent to the statement that there exists $\operatorname{M} >0$ such that $\lim_{\alpha \rightarrow \infty}\frac{|\operatorname{F}(\alpha)|}{|\operatorname{G}(\alpha)|} = \operatorname{M}$. 
Finally, $\mathbf{1}_{A}$ denotes the indicator function of the argument.

\begin{figure}[t]
	\centerline{\includegraphics[width=1\columnwidth]{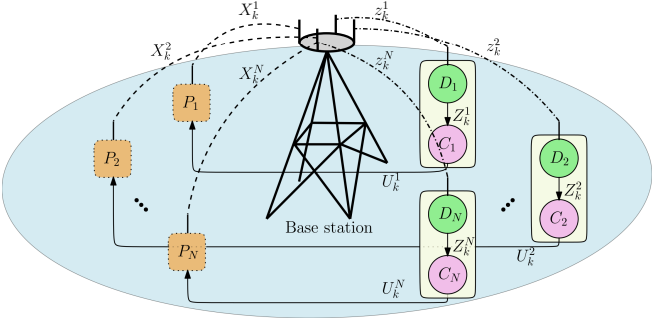}}
	\caption{\small{A prototypical networked control system constituting a BS and $N$ game playing agents. The BS, decoders, and controllers are active decision makers. Dashed lines denote an erasure-free wireless transfer, dotted-dashed lines denote erasure-prone one, and bold lines denote wired information transfer.}}
	\vspace{-0.2cm} 
	\label{Fig:Inf_flow}
    \vspace{-0.5cm} 
\end{figure}

\section{Problem Formulation}\label{sec:formulation}
In this section, we set up the two sub-problems in the $(N+1)$--player game, namely, a) the agent-level game problem, and b) the BS-level scheduling problem.

Consider a multi-agent system consisting of $N$ cost-coupled agents receiving information over an unreliable network. Each agent $i$ constitutes a plant, a decoder and a controller, labeled as a tuple $(P_i,D_i,C_i)$ as shown in Fig. \ref{Fig:Inf_flow}. Dynamics of $P_i$ evolve in discrete-time as
\begin{align}\label{system}
    X^i_{k+1} = A(\phi_i)X^i_k + B(\phi_i)U^i_k + W^i_k, ~k \geq 0,
\end{align}
where $X^i_k \in \mathbb{R}^n$ is the state and $U^i_k \in \mathbb{R}^m$ is the control input, both for agent $i$. The exogeneous noise $W^i_k \in \mathbb{R}^n$ is zero mean with covariance $C_W(\theta_i) > 0$. The initial state  $X^i_0$ of agent $i$ is assumed to have symmetric density with mean $x_{\phi_i,0}$ and covariance $\Sigma_x > 0$. Further, it is assumed to be independent of the noise process for all timesteps $k$. The system matrices $A(\phi_i),$ $B(\phi_i)$ are time-invariant with suitable dimensions. Further, they are chosen according to an empirical function $\mathbb{P}^N(\phi = \phi_i),$ where $\phi \in \Phi$ denotes the type of an agent chosen from a finite set $\Phi:= \{\phi_1, \cdots, \phi_p\}$. We assume that $|\mathbb{P}^N(\phi) - \mathbb{P}(\phi)| = \mathcal{O}(1/N),$ $\forall \phi$, where $\mathbb{P}(\phi)$ denotes the limiting distribution.

The state of plant $P_i$ is relayed to the decoder $D_i$ via an ideal uplink to the BS, which then regulates agent communications over the downlink. The downlink is constrained by a capacity limit of $\mathcal{C} < N$ units on the number of transmissions and serves as a bottleneck from the plant to the decoder. Further, it is unreliable in the sense that a packet communicated over it may be lost according to a Bernoulli distributed signal $\beta_k \sim \text{Ber}(p)$, with $p$ being the erasure probability. The decoder receives the information signal: 
\begin{align}
    z^i_k:= X^i_k \mathbf{1}_{E_k^i} + \emptyset \mathbf{1}_{(E_k^i)^c},
\end{align}
where the event $E^i_k$ denotes that state information is \textit{successfully} transmitted. Further, the event $\left(E^i_k\right)^c$ denotes no transmission (or $\emptyset$), which can be either due to no transmission by the BS or a packet drop over the channel. Let us denote the instants of information reception by the decoder as $\wp^i_k:=\zeta^i_k \beta_k$. Then, the information history of the decoder is defined as $I^{D_i}_k:= \{z^i_{0:k},\wp^{i}_{0:k},U^i_{0:k-1}\}$, based on which it computes the minimum mean-squared (MMS) estimate ($\mathbb{E}[X^i_k \mid I^{D_i}_k]$) of the state $X^i_k$. We adopt the convention that $z^i_{-1}=Z^i_{-1}=U^i_{-1}=0$, and $W^i_{-1}  = X^i_0- Z^i_0$, for all $i$.

Next, each controller $C_i$ receives the estimate from $D_i$ and aims to minimize the average cost function
\begin{align}\label{Finite_cost}
J_i(\pi_c)\!:= \!\overline{\lim\limits_{T\rightarrow \infty}} \frac{1}{T}\mathbb{E}\left[ \sum_{k=0}^{T-1} \!\!\|X^i_k \!-\! \mu^N_k\|_{Q(\phi_i)} \!+\! \|U^i_k\|_{R(\phi_i)} \right]\!\!,\!
\end{align}
where $Q(\phi_i) \geq 0$, $R(\phi_i)>0$, and $\mu^N_k := \frac{1}{N}\sum_{j=1}^NX^j_k$ represents the coupling between agents. %As a result of this term, the cost (and hence the minimizing policy) of an agent $i$ depends on not only its own but also on the policies of the other agents. 
Due to this coupling, the cost $J_i$ depends on the strategy $\pi_c:= \{\pi^1_c, \cdots, \pi^N_c\}$ of the entire population. In the sequel, we will denote the policy of the population excluding that of agent $i$ as $\pi^{-i}_c$. Further, $\pi^i_c \in \Pi_i:= \{\pi^i_c \mid \pi^i_c \text{ is adapted to } \sigma(I^{C_i}_s, s= 0, \cdots, k)\}$, $\forall i$, where $I^{C_i}_k:= \{U^j_{0:k-1},Z^j_{0:k}\}_{j \in [N]}$ denotes the information history of $C_i$, and $\sigma(\cdot)$ denotes the sigma-algebra generated by its argument. We assume that the pair $(A(\phi_i),B(\phi_i))$ is controllable and the pair $(A(\phi_i), \sqrt{Q(\phi_i)})$ is observable \cite{lewis2012optimal}. %We note that information history of the controller is a centralized one since it depends on the signals of the other agents. However, as is typically the case with networked games, which involve a large number of agents, having access to and keeping track of huge amount of information is difficult. Hence, 
Due to the difficulty in computing Nash equilibrium for the game \eqref{system}-\eqref{Finite_cost}, we will resort to the MFG framework (later in Section \ref{sec:MFG_summ}) to compute decentralized $\epsilon$-Nash policies where only local information will be required for decision-making and $\epsilon \rightarrow 0$ as $N \rightarrow \infty$. Next, %we proceed towards describing 
we describe the BS-level problem, where the objective is to compute an optimal scheduling policy of the BS.

% \subsection{BS-level Problem}
The aim of the BS is to \textit{efficiently} transmit information over the downlink. To this end, consider the most recent timestep when information was received by the $i^{th}$ controller, which is defined as $\ell^i_k:= \sup_{\ell \leq k}\{\ell \geq 0 \mid z^i_{\ell} \neq \emptyset\}$. Then, the AoI at the controller, which is the time elapsed since the generation of the most recent packet at the plant, is defined as $\tau^i_k:= k - \ell^i_k$. Further, its evolution is given as $\tau^i_{k+1} = (\tau^i_k+1)\mathbf{1}_{\{\wp^i_k=0\}}$,
% \begin{align}
% \tau^i_{k+1} = \begin{cases}
% 0, & \text{if} ~~\wp^i_k = 1, \\
% \tau^i_k + 1, & \text{otherwise},
% \end{cases}
% \end{align}
i.e., the AoI drops to zero only when a transmission is attempted by the BS and the packet is not dropped by the network. With the above AoI evolution, we formally define the capacity-constrained scheduling problem at the BS as follows:
\begin{problem}\label{Prob:Cap_Cons}
 \vspace{-0.5em}
 % \begin{small}
\begin{align*}
    \inf_{\gamma \in \Gamma}& J^{BS}(\gamma):= \overline{\lim}_{T \rightarrow \infty} \frac{1}{T} \mathbb{E} \left[ \frac{1}{N} \sum_{k=0}^{T-1}\sum_{i=1}^N w^i_k \tau^i_k \right] \\
     \mbox{s.t.}&~ \sum_{i=1}^N{\zeta^i_k} \leq \mathcal{C},~\forall k,
\end{align*}
% \end{small}
where $\gamma:= \{\gamma^1, \cdots, \gamma^N\}$ and $\Gamma:= \{\gamma \mid \gamma \text{ is adapted to}$  $\sigma(I^{BS}_s), s = 0, \cdots, k\}$ is the space of admissible scheduling policies with $I^{BS}_k:= \{\tau^i_{0:k},\zeta^i_{0:k-1},\wp^i_{0:k-1}\}_{i \in [N]}$ being the information history of the BS. Moreover, $w^i_k:= \mathbb{E}[\|e^i_k\|^2]$ denote the importance weights associated to each agent and are functions of the estimation error $e^i_k:= X^i_k - Z^i_k$ at the controller. Finally, the expectation is taken over the probabilistic scheduling due to the erasure-prone downlink and (possible) randomization in the scheduling policy.
\end{problem}

We note here that the information history of the BS includes the information reception instants of the agent decoders. This can be easily facilitated by a TCP-like protocol \cite{imer2006optimal}, where the decoder sends a one-bit ACK/NACK information to acknowledge whether or not the transmitted information was received by it. Further, Problem~\ref{Prob:Cap_Cons} involves a hard-limit on the number of transmissions, which makes it a combinatorics problem. It belongs to the class of restless multi-armed bandit problems, computing an optimal policy for which is quite difficult. Thus, in the sequel, we first reformulate the problem using the AoIs of each agent and then solve a relaxed problem involving a time-averaged constraint. The solution to the latter problem will then lead to a sub-optimal policy for Problem \ref{Prob:Cap_Cons}, which we will finally show to approach the optimal policy as $N$ increases.

To this end, we start by defining the shorthands $A_i:=A(\phi_i)$, $B_i:=B(\phi_i)$ and $C_{W^i}:=C_W(\phi_i)$. Then, we construct the decoder's MMS estimate as
\begin{align}
    Z^i_k = X^i_k \mathbf{1}_{[\wp^i_k=1]} + \mathbb{E}[X^i_k \mid I^{D_i}_k] \mathbf{1}_{[\wp^i_k=0]},
\end{align}
which upon using \eqref{system}, yields
\begin{align*}
    Z^i_k = X^i_k \mathbf{1}_{[\wp^i_k=1]} \!+ \!(A_iZ^i_{k-1} \!+\! B_iU^i_{k-1} \!+\! \mathbb{E}_c[W^i_{k-1}]) \mathbf{1}_{[\wp^i_k=0]},
\end{align*}
where $\mathbb{E}_c[\cdot] := \mathbb{E}[\cdot \mid \zeta^i_k = 0]$. Then, using similar arguments as in \cite{aggarwal2022weighted}, we can show that the term $\mathbb{E}_c[W^i_{k-1}]=0$ under the assumption of symmetric densities of $X^i_0$ and $W^i_0$. Hence, the estimate at the decoder can be easily computed as:
\begin{align}\label{Decoder_state}
     Z^i_k = X^i_k \mathbf{1}_{[\wp^i_k=1]} + (A_iZ^i_{k-1} + B_iU^i_{k-1}) \mathbf{1}_{[\wp^i_k=0]}.
\end{align}
With the above estimate, we can re-express the term $w^i_k$ in Problem \ref{Prob:Cap_Cons} using Lemma 1 from \cite{aggarwal2022weighted} as: %the following lemma.
% \begin{lemma}\cite{aggarwal2022weighted}
%     The estimation error of an agent $i$ is independent of all its past control inputs, and hence can be recast in terms of the AoI as $e^i_k = \sum_{\ell=1}^{\tau^i_k}A_i^\ell W^i_{k-\ell} \mathbf{1}_{[\wp^i_k = 0]}$. Moreover, we have that
    \begin{align}
        w^i_k := w^i_k(\tau^i_k,A_i,C_{W^i})=\sum_{\ell=1}^{\tau^i_k}tr\left({A^{\ell-1}_i}^\top A^{\ell-1}_i C_{W^i}\right).
    \end{align}
     % \end{lemma}
Now, let us define the running cost $c(\tau^i_k,A_i,C_{W^i}):= w^i_k\tau^i_k$. Then, since the capacity constraint in Problem \ref{Prob:Cap_Cons} makes the optimal policy difficult to compute, we relax the problem to one with an average constraint:

\begin{problem}\label{Prob:Relaxed_cons}
% \begin{small}
\begin{align*}
    \inf_{\gamma \in \Gamma}& J^{BS}(\gamma):= \overline{\lim}_{T \rightarrow \infty} \frac{1}{T} \mathbb{E} \left[ \frac{1}{N} \sum_{k=0}^{T-1}\sum_{i=1}^N c(\tau^i_k,A_i,C_{W^i}) \right] \\
    \mbox{s.t.} &~ \overline{\lim}_{T \rightarrow \infty} \frac{1}{T}\mathbb{E}\left[\sum_{k=0}^{T-1}\sum_{i=1}^N{\zeta^i_k}\right] \leq \mathcal{C}.
\end{align*}
% \end{small}
\end{problem}
We note that the constraint in the the above problem entails that more than $\mathcal{C}$ agents can be connected over the downlink at any given timestep as long as the capacity constraint is satisfied in the long run. This is clearly a weaker constraint than the one in Problem \ref{Prob:Cap_Cons} since the latter requires the capacity constraint to be satisfied at all timesteps $k$. Hence, it is indeed a relaxation of Problem \ref{Prob:Cap_Cons}. The objective now is to compute an optimal solution to Problem \ref{Prob:Relaxed_cons} and then utilize the solution to come up with an asymptotically optimal solution to Problem \ref{Prob:Cap_Cons}. To this end, we start by constructing the Lagrangian of Problem \ref{Prob:Relaxed_cons} (with $\lambda \geq 0$ the Lagrange multiplier):
\begin{align*}
    \mathscr{L}(\gamma,\lambda) \!=\! \overline{\lim\limits_{T \rightarrow \infty}}\frac{1}{T} \mathbb{E} \!\!\left[ \frac{1}{N}\!\! \sum_{k=0}^{T-1}\sum_{i=1}^N c(\tau^i_k,A_i,C_{W^i})\!\! +\! \lambda \!\!\left(\zeta^i_k \!\!- \!\!\frac{\mathcal{C}}{N} \right)\!\! \right]\!\!,
\end{align*}
where $ \lambda$ can be thought of as a price on the downlink utilization. Thus, given a fixed $\lambda$, we decouple the $N$--agent scheduling problem into $N$ decoupled single-agent problems:
\begin{problem}\label{decoupled_problem} For all $i \in [N]$,
\begin{align*} 
     \inf_{\gamma^i \in \Gamma^i} V^i(\gamma):=  \overline{\lim}_{T \rightarrow \infty} \frac{1}{T} \mathbb{E} \left[ \sum_{k=0}^{T-1} c(\tau^i_k,A_i,C_{W^i})  +\lambda \zeta^i_k \right].
\end{align*}
\end{problem}
In the next section, we will first solve Problem \ref{decoupled_problem}, for which we will cast the evolution of the AoI in an MDP framework and then construct a suboptimal policy for Problem \ref{Prob:Cap_Cons}. Additionally, we will also suppress the superscript $i$.
\section{Solution to the BS-level Problem}\label{Sol:BSlevel}
We compute an optimal policy for Problem \ref{decoupled_problem} by first defining it as a discrete-time MDP $\operatorname{M}:=(\operatorname{S},\operatorname{A},\operatorname{P},\operatorname{C})$. The state space $\operatorname{S}$ is the space of non-negative integers. The action set $\operatorname{A}= \{0,1\}$. An action $a=0$ denotes that a transmission is not attempted while $a=1$ denotes that it is. The probability transition function $\operatorname{P}$ describes the evolution of the AoI, i.e., $\operatorname{P}(\tau_{k+1} = 0 \mid \tau_k) = a_k(1-p)$ and $\operatorname{P}(\tau_{k+1} = \tau_k+1 \mid \tau_k) = 1-a_k +a_kp$. Finally, with the per stage cost defined to be $C(\tau,a):= c(\cdot) + \lambda a$, the MDP objective is to infimize the function $V(\gamma):= \overline{\lim}_{T \rightarrow \infty}\frac{1}{T}\mathbb{E} \big[\sum_{k=0}^{T-1} \operatorname{C}(\tau_k,a_k) \big]$ 
% \azedit{Having defined Problem \ref{decoupled_problem} as an MDP we solve it below.} {\color{gray}Then, we can formally define the MDP objective as
% \begin{align}\label{Prob:MDP}
%     \inf_{\gamma \in \Gamma} V(\tau,\gamma):= \overline{\lim\limits_{T \rightarrow \infty}}\frac{1}{T}\mathbb{E} \left[\sum_{k=0}^{T-1} \operatorname{C}(\tau_k,a_k) \right],
% \end{align}
for which we compute an optimal policy next. %%in the next subsection.
% \subsection{Single-Agent Scheduling Policy}
\subsection{Solution to Problem \ref{Prob:Relaxed_cons}}
\noindent We start by stating the following theorem, which characterizes an optimal policy solving Problem~\ref{decoupled_problem}.
\begin{theorem}\label{Th:determ_scheduling}
    Given $\lambda \geq 0$, there exists a stationary policy $\gamma_s$ solving the above MDP with an optimal cost of $\sigma^*$, which is independent of $\tau$. Moreover, the optimal policy is given as $a:= \mathbf{1}_{[\tau \geq \kappa]}$, for integer $\kappa:= \kappa(A,C_W,\lambda)$.
\end{theorem}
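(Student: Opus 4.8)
The plan is to read Problem~\ref{decoupled_problem} as the average-cost control of the countable-state MDP $\operatorname{M}$ introduced above and to establish an Average Cost Optimality Equation (ACOE) for it, from which both the constancy of the optimal gain $\sigma^*$ and the threshold form of the optimal policy follow. First I would settle well-posedness: the per-stage cost $c(\tau,A,C_W)=w(\tau)\tau$ is nonnegative, non-decreasing and in fact diverges (since $w(\tau)\ge \operatorname{tr}(C_W)>0$ gives $c(\tau)\ge \tau\operatorname{tr}(C_W)$), so to invoke average-cost theory I must exhibit one stationary policy with finite average cost. A fixed-threshold policy $a=\mathbf{1}_{[\tau\ge m]}$ induces an irreducible Markov chain on the AoI whose stationary law is eventually geometric with ratio $p$; under the standing assumption on $A$ this geometric tail is summable against $c(\cdot)$, so the induced cost is finite and the infimum over $\Gamma^i$ is finite (in the erasure-free case $p=0$ the AoI is even bounded by $m$, so no assumption on $A$ is needed). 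Irreducibility under any such policy makes the model unichain, which is precisely what forces $\sigma^*$ to be independent of the initial $\tau$.

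Next I would run a vanishing-discount argument. For $\alpha\in(0,1)$ let $V_\alpha$ be the $\alpha$-discounted value function, the minimal solution of $V_\alpha(\tau)=\min_{a\in\{0,1\}}\{C(\tau,a)+\alpha\,\mathbb{E}[V_\alpha(\tau')\mid\tau,a]\}$. The key structural observation is that for each fixed $a$ the transition kernel is stochastically monotone in $\tau$ (a larger current age yields a stochastically larger next age, whether $a=0$ or $a=1$) and $C(\tau,a)=c(\tau)+\lambda a$ is non-decreasing in $\tau$; hence, starting value iteration from $V^{(0)}\equiv 0$ and using that a minimum of functions non-decreasing in $\tau$ is again non-decreasing in $\tau$, every iterate and therefore $V_\alpha$ is non-decreasing in $\tau$. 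Standard Sennott/SEN-type conditions — finiteness of the optimal cost together with an $\alpha$-uniform finite majorant $0\le V_\alpha(\tau)-V_\alpha(0)\le h(\tau)$ obtained from a Lyapunov/drift bound driven by the finite-cost threshold policy — then let me pass to a subsequence $\alpha\uparrow1$ along which $(1-\alpha)V_\alpha(0)\to\sigma^*$ and $V_\alpha(\cdot)-V_\alpha(0)\to h(\cdot)$, with $(\sigma^*,h)$ solving the ACOE
\begin{align*}
\sigma^*+h(\tau)=\min_{a\in\{0,1\}}\Big\{&\,c(\tau,A,C_W)+\lambda a\\
&{}+a(1-p)h(0)+\big(1-a(1-p)\big)h(\tau+1)\Big\},
\end{align*}
and with $h$ inheriting non-decreasingness from the $V_\alpha$'s.

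The threshold form is then immediate. The $a$-dependent part of the bracket equals $h(\tau+1)+a\big(\lambda-(1-p)(h(\tau+1)-h(0))\big)$, so (for $p<1$) $a=1$ is optimal exactly when $h(\tau+1)-h(0)\ge \lambda/(1-p)$. Because $h$ is non-decreasing and, by the divergence of $c$, grows without bound, the map $\tau\mapsto h(\tau+1)-h(0)$ is non-decreasing and eventually exceeds $\lambda/(1-p)$; hence this inequality defines an up-set $\{\tau\ge\kappa\}$ with $\kappa=\kappa(A,C_W,\lambda)$ the least such integer, and breaking ties in favour of transmission gives exactly $a=\mathbf{1}_{[\tau\ge\kappa]}$. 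The stationary verification theorem for the ACOE then certifies that this policy $\gamma_s$ attains $\sigma^*$ and is average-cost optimal. (The degenerate case $p=1$ is trivial: the action is ineffective, $a\equiv0$ is optimal, i.e.\ $\kappa=\infty$.)

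I expect the main obstacle to be the technical step underlying the vanishing-discount limit: since the state space is countably infinite and $c(\tau)$ may grow geometrically in $\tau$, one must produce the $\alpha$-uniform finite majorant for $V_\alpha(\tau)-V_\alpha(0)$ and verify the recurrence/accessibility hypotheses — this is exactly where the standing assumption on $A$ enters (it is what guarantees a finite-cost stabilizing threshold policy and a valid drift inequality), and it is essentially the only place nontrivial work is required. A fully self-contained alternative, which I would mention but not develop, avoids general MDP machinery altogether: compute the average cost $J(\kappa)$ of each threshold policy in closed form from the stationary distribution of the resulting AoI chain, show $\kappa\mapsto J(\kappa)$ is unimodal (discretely convex) so that a minimizer $\kappa^\ast$ exists, and rule out non-threshold policies by an interchange/coupling argument on the AoI trajectories.
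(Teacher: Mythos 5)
Your proposal is correct and follows essentially the same route as the paper, which defers this proof to \cite[Theorem~1]{aggarwal2022weighted} but whose Appendix~A works with exactly the average-cost optimality equation $V(\tau)+\sigma^* = \min\{C(\tau,0)+V(\tau+1),\, C(\tau,1)+pV(\tau+1)+(1-p)V(0)\}$ that your vanishing-discount argument produces, with the threshold read off from monotonicity of the relative value function in the same way. The only point worth flagging is that the finiteness of the optimal cost (and hence the validity of the Sennott-type conditions) genuinely requires Assumption~1, which the paper states only after the theorem; you correctly identify this as the one place where nontrivial work is needed.
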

% \begin{proof}[Proof Sketch]
%    The proof of the theorem relies of first constructing a discounted-cost MDP on an infinite horizon and arguing that the optimal policy for the same has a threshold structure. Then, using the main results of \cite{sennott1993constrained}, we can show threshold structure of the policy as in the statement of the theorem. The details follow in a similar manner as in \cite{aggarwal2022weighted}.
% \end{proof}
The proof follows in a similar manner as the proof of \cite[Theorem~1]{aggarwal2022weighted}. %Now that we have established 
The theorem says that the optimal policy for Problem \ref{decoupled_problem} %\eqref{Prob:MDP} 
is a threshold policy. Next, we
%the remaining piece is to 
compute the threshold parameter $\kappa$ by invoking the following condition, which links the erasure probability $p$ and the instability in the agent's dynamics.
\begin{assumption}\label{As:sys_param}
We have that $\|A\|_F^2p < 1$.
\end{assumption}
Notice that the above assumption is standard in the literature on unreliable communication \cite{imer2006optimal,schenato2007foundations,moon2014discrete} and formalizes the fact that a higher erasure probability restricts our ability to stabilize highly unstable agents. In the extreme case when no communication is possible (i.e., $p = 1$) it requires %instates 
that all agents must be stable. The detailed derivation for computing $\kappa$ is provided in Appendix~\ref{ap:Compute_kappa}.
%Finally, we note that the structural Assumption \ref{As:sys_param} on the system parameters is very similar in spirit to that proposed in literature \cite{imer2006optimal,schenato2007foundations,moon2014discrete} and serves to couple the degree of instability of an agent that can be stabilized with a given probability of transmission failure. 
Now, with the deterministic single-agent policy as provided above, we proceed toward constructing an optimal policy for Problem \ref{Prob:Relaxed_cons}
% \subsection{Multi-agent Scheduling Policy}
% Now that we have explicitly characterized the optimal policy structure for Problem \ref{decoupled_problem} given a fixed $\lambda$, the aim in this subsection is to construct an optimal policy for Problem \ref{Prob:Relaxed_cons}. 
which, as we will see, will be a randomized policy since the optimal policy for such a constrained optimization problem may not, in general, lie in the class of stationary deterministic policies \cite{altman1999constrained}. Henceforth, we resume the use of superscript $i$  to denote the $i^{th}$ agent.

 We start by computing an optimal value of $\lambda$. To this end, consider the threshold parameter $\kappa^i(\lambda):= \kappa^i(A_i,C_{W^i},\lambda)$ as in Theorem 
 \ref{Th:determ_scheduling}. Then, the expected return time of agent $i$ starting from $\tau^i_k = 0$ can be found by $\operatorname{R}^i_{0} = [\sum_{r=0}^\infty(\kappa^i+r+1)(1-p)^{r+1}p^r]^{-1}$ and is equal to:
%
% \begin{align}
%     \operatorname{R}^i_{0} &= \frac{1}{\sum_{r=0}^\infty(\kappa^i+r+1)(1-p)^{r+1}p^r} \nonumber \\
      % &= \frac{\kappa(1-p)}{p} + \frac{1-p}{p^2}
%      & = \frac{((1-p)p-1)^2}{(1-p)((p-1)p(\kappa^i+1)+(1-p)p + \kappa^i + 1)}.
 %\end{align}
  \begin{align}
     \operatorname{R}^i_{0} \!= \!\frac{((1-p)p-1)^2}{(1-p)((p-1)p(\kappa^i+1)\!+\!(1-p)p \!+\! \kappa^i \!+\! 1)}.\!
 \end{align}
Then, under the average constraint in Problem \ref{Prob:Relaxed_cons}, we have $R(\lambda):= \sum_{i=1}^N\operatorname{R}^i_{0} \leq \mathcal{C}$. Consequently, we can use the iterative Bisection search algorithm, as given in \cite{aggarwal2022weighted,maatouk2020age1}, starting with the initial parameters $\underline{\lambda}^{(0)} = 0$, and $\overline{\lambda}^{(0)} = 1$. The algorithm terminates when $|\overline{\lambda}^{(m)} -\underline{\lambda}^{(m)}|\leq \epsilon$, for an iterating index $m$ and a suitably chosen $\epsilon > 0.$ 
% {\color{gray}which we summarize as follows. We initialize two parameters $\underline{\lambda}^{(0)} = 0$ and $\overline{\lambda}^{(0)} = 1$. We then calculate the threshold parameters $\kappa^i(\lambda_u^{(0)})$ for all $i$, by using \eqref{kappa_computation} and \eqref{kapp_computation1}. Consequently, we iterate by setting $\underline{\lambda}^{(j+1)} = \overline{\lambda}^{(j)}$ and $\overline{\lambda}^{(j+1)} = 2\overline{\lambda}^{(j)}$ until the constraint \eqref{update_rate_constraint} is satisfied for $\overline{\lambda}^{(r)}$, for some integer $r$. Then, we define the interval $[\underline{\lambda}^{(r)},\overline{\lambda}^{(r)}]$ which contains the optimal value of the multiplier $\lambda^*$, that can be calculated using the \emph{Bisection method}. The iteration stops when $|\overline{\lambda}^{(m)} -\underline{\lambda}^{(m)}|\leq \epsilon$, for the iterating index $m$ and for a suitably chosen $\epsilon > 0.$ }
Next, let us define $\underline{\lambda}^* = \underline{\lambda}^{(m)}$ and $\overline{\lambda}^*= \overline{\lambda}^{(m)}$ as obtained above, and the corresponding deterministic policies as $\gamma^{i}_{s_1}$ and $\gamma^{i}_{s_2}$, which are obtained from Theorem \ref{Th:determ_scheduling}. More precisely, we have that $\underline{\lambda}^*\!\! \mapsto \!\underline{\kappa}(\underline{\lambda}^*)\!\!:= \!\!\{\underline{\kappa}^1(\underline{\lambda}^*), \cdots, \underline{\kappa}^N(\underline{\lambda}^*)\}^\top\!\!$ and $\overline{\lambda}^*\!\! \mapsto\!$ $ \overline{\kappa}(\overline{\lambda}^*):=\!\{\overline{\kappa}^1(\overline{\lambda}^*),\cdots, \overline{\kappa}^N(\overline{\lambda}^*)\}^\top$. Also, let $\overline{\mathcal{C}}$ and $\underline{\mathcal{C}}$ be the total capacities used corresponding to the multipliers $\overline{\lambda}^*$ and $\underline{\lambda}^*$, respectively. Then, we define %the probability $q$ and 
the deterministic policies: %, for all $i$ as:
% \begin{subequations}\label{Random_policy}
% \begin{align}
% % \label{Prob_of_randomization} q & := \frac{\mathcal{C} - \overline{\mathcal{C}}}{\underline{\mathcal{C}} - \overline{\mathcal{C}}}, \\
% \label{deterministic_policy1}\gamma^{i}_{s_1}(\tau^i) & := \mathbf{1}_{[\tau^i \ge \underline{\kappa}^i(A_i,C_{W^i},\underline{\lambda}^*)]}, \\
% \label{deterministic_policy2}\gamma^{i}_{s_2}(\tau^i) & := \mathbf{1}_{[\tau^i \ge \overline{\kappa}^i(A_i,C_{W^i},\overline{\lambda}^*)]},
% \end{align}
% \end{subequations}
\begin{align}\label{Random_policy}
\gamma^{i}_{s_1}(\tau^i) := \mathbf{1}_{[\tau^i \ge \underline{\kappa}^i(\cdot,\cdot,\underline{\lambda}^*)]}, \hspace{0.15cm}
\gamma^{i}_{s_2}(\tau^i) := \mathbf{1}_{[\tau^i \ge \overline{\kappa}^i(\cdot,\cdot,\overline{\lambda}^*)]},
\end{align}
for all $i$ using which we can construct a randomized policy $\gamma_R:= [\gamma^{1}_{R}, \cdots, \gamma^{N}_{R}]^\top$ for the relaxed Problem \ref{Prob:Relaxed_cons} as:
 \begin{small}
\begin{align}\label{Randomized_policy}
\hspace{-0.2cm} \gamma^{i}_R = q \gamma^{i}_{s_1} + (1-q) \gamma^{i}_{s_2}, ~\forall i, %\frac{\mathcal{C} - \overline{\mathcal{C}}}{\underline{\mathcal{C}} - \overline{\mathcal{C}}}
\end{align}
\end{small}
where $q := (\mathcal{C} - \overline{\mathcal{C}})/(\underline{\mathcal{C}} - \overline{\mathcal{C}})$ is the probability of randomization. Next, in the following proposition, we state that the randomized policy obtained is indeed optimal for Problem~\ref{Prob:Relaxed_cons}.

\begin{proposition}\cite{aggarwal2022weighted}\label{Optimality_of_Randomized_policy}
Under Assumption \ref{As:sys_param}, the policy \eqref{Random_policy}-\eqref{Randomized_policy} is optimal for the relaxed minimization Problem \ref{Prob:Relaxed_cons}. 
\end{proposition}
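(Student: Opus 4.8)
The plan is to invoke the Lagrangian duality framework for constrained average-cost MDPs \cite{altman1999constrained}. First I would recast Problem~\ref{Prob:Relaxed_cons} over the space of stationary occupation measures on $\operatorname{S}\times\operatorname{A}$: under any stationary policy the AoI chain of agent $i$ is unichain and aperiodic (immediate from the transition law, since $p<1$ forces a positive drift back to $0$), so a well-defined stationary occupation measure exists, and both the objective $J^{BS}$ and the time-averaged capacity usage are \emph{linear} functionals of these measures. Thus Problem~\ref{Prob:Relaxed_cons} is a (countably-infinite-dimensional) linear program, feasible because $\mathcal{C}\ge 1$ permits, e.g., a round-robin-type schedule, and with finite value because Assumption~\ref{As:sys_param} keeps every $w^i_k$ and hence $c(\tau^i_k,A_i,C_{W^i})$ integrable along any threshold policy. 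Standard constrained-MDP theory then gives strong duality, $\inf_{\gamma\in\Gamma}\sup_{\lambda\ge0}\mathscr{L}(\gamma,\lambda)=\sup_{\lambda\ge0}\inf_{\gamma\in\Gamma}\mathscr{L}(\gamma,\lambda)=:\sup_{\lambda\ge0}g(\lambda)$, with $g$ concave.

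Next I would exploit the decoupling: for each fixed $\lambda$ the inner minimization $g(\lambda)=\inf_\gamma\mathscr{L}(\gamma,\lambda)$ splits into the $N$ single-agent problems of Problem~\ref{decoupled_problem}, each solved by Theorem~\ref{Th:determ_scheduling} by a threshold policy with integer threshold $\kappa^i(\lambda)$. Monotonicity is the crux: raising $\lambda$ raises the per-transmission price, so each $\kappa^i(\lambda)$ is non-decreasing, hence the expected return time $\operatorname{R}^i_{0}$ (available in closed form) is non-decreasing, and therefore the aggregate steady-state transmission rate $R(\lambda)=\sum_{i=1}^N\operatorname{R}^i_{0}$ is non-increasing in $\lambda$. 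Since the $\kappa^i$ are integer-valued, $R(\cdot)$ is a piecewise-constant, right-continuous step function; letting the bisection tolerance $\epsilon\to0$, the iterates $\underline\lambda^{(m)},\overline\lambda^{(m)}$ converge to a common breakpoint $\lambda^*$ with $\underline{\mathcal{C}}=R(\lambda^{*-})\ge\mathcal{C}\ge R(\lambda^{*+})=\overline{\mathcal{C}}$, so that $q=(\mathcal{C}-\overline{\mathcal{C}})/(\underline{\mathcal{C}}-\overline{\mathcal{C}})\in[0,1]$ is well defined. At $\lambda^*$ both the lower thresholds $\underline\kappa$ and the upper thresholds $\overline\kappa$ yield policies $\gamma^i_{s_1},\gamma^i_{s_2}$ that attain the \emph{same} Lagrangian value $g(\lambda^*)$ — this is the one step that needs care, and I would establish it from the per-agent average-cost optimality equation, showing that at the breakpoint both $\underline\kappa^i$ and $\overline\kappa^i$ lie among the minimizing actions, equivalently that $V^i$ is continuous in $\lambda$ along each fixed threshold and the two thresholds become simultaneously optimal exactly at $\lambda^*$.

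Then I would close with the saddle-point/KKT characterization. The randomized policy $\gamma^i_R=q\gamma^i_{s_1}+(1-q)\gamma^i_{s_2}$ is a mixture of minimizers of $\mathscr{L}(\cdot,\lambda^*)$ over the convex occupation-measure set, hence itself a minimizer: $\mathscr{L}(\gamma_R,\lambda^*)=g(\lambda^*)$. Its steady-state transmission rate is the corresponding mixture $q\underline{\mathcal{C}}+(1-q)\overline{\mathcal{C}}$, which equals $\mathcal{C}$ precisely by the choice of $q$; thus $\gamma_R$ is feasible for Problem~\ref{Prob:Relaxed_cons} and complementary slackness $\lambda^*\big(\mathrm{rate}(\gamma_R)-\mathcal{C}\big)=0$ holds. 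By strong duality these conditions are sufficient for primal optimality, so $\gamma_R$ solves Problem~\ref{Prob:Relaxed_cons}. The main obstacle I anticipate is not the duality bookkeeping but the joint-optimality step at $\lambda^*$: one must pass to the bisection limit cleanly and verify, through the AoI chain's optimality equation (whose validity rests on the unichain/aperiodic structure that also licenses the occupation-measure LP), that both bracketing threshold policies minimize $\mathscr{L}(\cdot,\lambda^*)$; everything else is routine and mirrors \cite{aggarwal2022weighted}.
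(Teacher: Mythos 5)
Your argument is correct and is essentially the proof the paper intends: the paper states this proposition without proof, deferring to \cite{aggarwal2022weighted}, and the construction it sets up (the bisection bracketing a critical multiplier, the two threshold policies $\gamma^i_{s_1},\gamma^i_{s_2}$, and the mixture weight $q=(\mathcal{C}-\overline{\mathcal{C}})/(\underline{\mathcal{C}}-\overline{\mathcal{C}})$) is precisely the scaffolding for the Lagrangian/occupation-measure/complementary-slackness argument you lay out. The step you single out as delicate --- that both bracketing threshold policies minimize $\mathscr{L}(\cdot,\lambda^*)$ at the breakpoint so that their mixture is again a minimizer meeting the rate constraint with equality --- is indeed the crux of the cited proof, and your plan to extract it from the per-agent average-cost optimality equation is the standard resolution.
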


With the solution to Problem~\ref{Prob:Relaxed_cons}, in the next subsection, we propose a novel asymptotically optimal policy for Problem~\ref{Prob:Cap_Cons}.

% \subsection{Capacity-constrained Scheduling Policy}\label{subsec:hard_cons_pol}
\vspace{-0.1em}
\subsection{Solution to Problem \ref{Prob:Cap_Cons}}\label{subsec:hard_cons_pol}
In this subsection, we provide a sub-optimal solution to Problem \ref{Prob:Cap_Cons}, using the solution to Problem \ref{Prob:Relaxed_cons}, which is shown to be asymptotically optimal as $N \rightarrow \infty$. We refer to this policy as the maximum-age-first tie-breaking protocol (or MATB-P for short). %To construct the same, c
Consider the solution $\gamma^i_R$ to Problem \ref{Prob:Relaxed_cons} as computed in the previous subsection and let $a^i_k = \gamma^i_R(I^{BS}_k)$ be the scheduling action at timestep $k$. Define $\Lambda_k:= \{j \in [N] \mid a^j_k = 1\}$ as the set of agents scheduled to be transmitted at instant $k$ and its cardinality to be $n^\lambda_{k}$. Then, the scheduling decision $\zeta^i_k$ under MATB-P ($\gamma^i$) is given as:
\begin{itemize}
    \item If $n^\lambda_k \leq \mathcal{C}$, then $\zeta^i_k = a^i_k$
    \item If $n^\lambda_k > \mathcal{C}$, then ${\zeta}^{i}_k = 1$ for a subset $\Lambda^{max}_k \subset \Lambda_k$ of the agents, where the cardinality of $\Lambda^{max}_k$ is $\mathcal{C}$ for all $k$, and it constitutes the agents with the maximum values of $\tau_k$. The agents in the set $\Lambda_k\setminus \Lambda^{max}_k$ remain unselected.
\end{itemize}
In the next section, we provide a tail-bound analysis of the constructed MATB policy, first, for the special case with no channel erasure, and then, for the general case.

\section{Tail-bound Analysis \& $\varepsilon$-Optimality}\label{sec:Large_dev}
In this section, we show that the costs under $\gamma_R$ and ${\gamma}$ approach each other as $N \rightarrow \infty$. To this end, we first prove Proposition \ref{Prop:Delta_boundedness} for the case of an ideal downlink with $p=0$, where we show that the maximum AoI is uniformly bounded independent of $N$, and then Theorem \ref{Prop:Delta_High_deviation_bounded} for the general non-ideal downlink case, where we provide a high confidence bound on the maximum AoI, again, independent of $N$. Then, we finally show (using Theorems \ref{Th:Asymptotic_optimality_deterministic} and \ref{Th:erasure_system}) that $\gamma$ approaches the optimal policy as $N \rightarrow \infty$ in both cases.

To this end, consider the Markov chain induced by the relaxed policy $\gamma^{i}_R$ for the $i^{th}$ agent as
\begin{align*}
\tau^i_{k+1} \!=\! \left\{ \begin{array}{ll}
\tau^i_k + 1, \qquad \text{w.p.}~ 1, &  \tau^i_k < \underline{\kappa}^i(\underline{\lambda}^*), \\
\!\!\left\{ \begin{array}{ll}
\tau^i_k + 1, & \text{w.p.}~ (1-q)p, \\
{0}, &\text{w.p.}~ 1-(1-q)p,
\end{array}
\right.\!\! & \!\!\tau^i_k \!=\!\underline{\kappa}^i(\underline{\lambda}^*), \\
\!\!\left\{ \begin{array}{ll}
\tau^i_k + 1, & \text{w.p.}~ p, \\
{0}, &\text{w.p.}~ 1-p,
\end{array}
\right. & \tau^i_k  \geq \overline{\kappa}^i(\overline{\lambda}^*).
\end{array}
\right.
\end{align*}
Then, since each state in the set $\operatorname{S}$ is reachable from every other state, the above Markov chain is irreducible, and hence admits a unique stationary distribution $\pi^i$. Now, we provide the following proposition which shows that the AoI under MATB-P for a deterministic channel (with $p=0$) is uniformly bounded, independent of $N$. 

\begin{proposition}\label{Prop:Delta_boundedness}
Under a fixed $\alpha \!= \!\mathcal{C}/N$ and $p = 0$, the AoI $\tau^i_k$ of any agent $i \in [N]$  under MATB-P is bounded by $\Os(\alpha^{-1})$.
\end{proposition}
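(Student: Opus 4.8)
The plan is to exploit the deterministic channel ($p=0$) to reduce the analysis to a clean combinatorial/counting argument about how many agents can simultaneously be ``blocked'' from transmitting by the capacity cap. First I would observe that when $p=0$, the relaxed policy $\gamma^i_R$ degenerates: each agent's AoI simply increments until it hits its (randomized) threshold, and then a transmission is \emph{requested} and — since there is no erasure — would succeed. So the only way an agent's AoI exceeds $\overline{\kappa}^i(\overline{\lambda}^*)$ is if MATB-P's tie-breaking step repeatedly denies that agent service in favor of $\mathcal{C}$ other agents with even larger AoI. Let $\kappa_{\max} := \max_{i\in[N]} \overline{\kappa}^i(\overline{\lambda}^*)$; note $\kappa_{\max}$ depends only on the type set $\Phi$, the noise covariances, and $\lambda$, hence is bounded by a constant independent of $N$.

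The key step is a ``conservation of service'' or pigeonhole argument. In any window of $T$ consecutive timesteps, the capacity constraint allows at most $\mathcal{C} T = \alpha N T$ successful transmissions. On the other hand, if some agent $i$ has AoI that grows to a value $M$, then during the last $M$ steps that agent requested service (its AoI was above threshold) but was denied every time, meaning at each of those $M$ steps there were $\mathcal{C}$ \emph{other} agents with strictly larger AoI who were served. Tracking this backward, one builds a nested family of agents whose AoIs are stacked above one another; since each such agent, once it finally transmits, has used a ``slot,'' and the total slot budget per unit time is $\alpha N$, the height of this stack cannot exceed something on the order of $N/\mathcal{C} = \alpha^{-1}$ before the accounting becomes infeasible. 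Making this precise: I would argue that the number of agents whose AoI exceeds $\kappa_{\max} + j$ is non-increasing in $j$ in a suitable averaged sense and must drop to zero once $j$ is a constant multiple of $\alpha^{-1}$, because otherwise the per-step demand for the top slots would persistently exceed $\mathcal{C}$ in a way that cannot be cleared. Combined with the constant bound $\kappa_{\max}$, this yields $\tau^i_k \le \kappa_{\max} + O(\alpha^{-1}) = O(\alpha^{-1})$ uniformly in $i$ and $k$ (after a transient, or for all $k$ if the system starts with bounded AoI).

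I would organize the write-up as: (i) reduce to $p=0$ dynamics and define $\kappa_{\max}$; (ii) show that an agent's AoI strictly above $\kappa_{\max}$ implies it was in $\Lambda_k\setminus\Lambda^{\max}_k$, i.e., crowded out by $\mathcal{C}$ agents of higher age; (iii) set up the level sets $S_k(m) := \{i : \tau^i_k \ge m\}$ and establish the recursion/monotonicity that $|S_{k+1}(m+1)| \le \max\{0, |S_k(m)| - \mathcal{C}\}$ whenever $m \ge \kappa_{\max}$ (every agent at level $m$ either gets served, using one of the $\mathcal{C}$ slots, or advances to level $m+1$); (iv) iterate this $\lceil N/\mathcal{C}\rceil$ times starting from the trivial bound $|S_k(\kappa_{\max})| \le N$ to conclude $|S_k(\kappa_{\max} + \lceil N/\mathcal{C}\rceil)| = 0$, i.e., no agent ever reaches AoI $\kappa_{\max} + \lceil N/\mathcal{C}\rceil$.

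The main obstacle is step (iii): the inequality $|S_{k+1}(m+1)| \le \max\{0,|S_k(m)|-\mathcal{C}\}$ is only clean when \emph{all} of the $\mathcal{C}$ served agents are drawn from level $\ge m$; but MATB-P only invokes tie-breaking when $n^\lambda_k > \mathcal{C}$, and it is conceivable that at some step the agents requesting service are few (so everyone at high age is served) while at another step many low-age agents coincidentally cross threshold. I would handle this by noting that any agent at level $m \ge \kappa_{\max} \ge \max_i \underline{\kappa}^i$ is \emph{always} requesting (it is past even the larger threshold), so it always competes; and MATB-P always prioritizes the largest ages, so if $|S_k(m)| > \mathcal{C}$ then the $\mathcal{C}$ served agents all lie in $S_k(m)$, giving exactly the claimed recursion. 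The residual subtlety — agents at level $m$ that \emph{do} get served dropping to $0$ rather than to some intermediate value, and the bookkeeping of the randomization probability $q$ — does not affect the counting since $p=0$ forces every attempted transmission to succeed regardless of the coin flip. A final remark would address initialization: if $\tau^i_0$ can be arbitrary, the bound holds for $k$ larger than the initial maximum age; under the paper's standing convention that ages start at $0$, it holds for all $k$.
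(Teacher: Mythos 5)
Your proposal is correct, and it reaches the conclusion by a genuinely different counting argument than the paper's. The paper works with the single set $\bar{S}_k$ of agents whose age exceeds $\bar{\kappa}:=\max(\max_{\phi}\overline{\kappa}^{\phi}(\bar{\lambda}^*),\lceil\alpha^{-1}\rceil)$, identifies the first time $t_0$ at which $|\bar{S}_k|>\mathcal{C}$ and the duration $m$ of that overload, and shows that the sets of agents served during $[t_0,t_0+\min(m,\lceil\alpha^{-1}\rceil)]$ are pairwise disjoint --- this uses the observation that with $p=0$ a served agent's age resets to $0$ and cannot re-cross $\bar{\kappa}\geq\lceil\alpha^{-1}\rceil$ within the window; a contradiction with $(\lceil\alpha^{-1}\rceil+1)\mathcal{C}\geq N+\mathcal{C}>N$ then forces $m\leq\lceil\alpha^{-1}\rceil$ and hence $\tau^i_k\leq 2\bar{\kappa}$. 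Your level-set recursion $|S_{k+1}(m+1)|\leq\max\{0,|S_k(m)|-\mathcal{C}\}$ for $m\geq\kappa_{\max}$ sidesteps both the disjointness lemma and the artificial inflation of the threshold to $\max(\cdot,\lceil\alpha^{-1}\rceil)$: it only needs that every agent at level $\geq\kappa_{\max}$ requests deterministically (both randomized thresholds are crossed, so the coin flip is irrelevant) and that MATB-P draws all $\mathcal{C}$ served agents from the top level set whenever that set has at least $\mathcal{C}$ members, with $p=0$ guaranteeing every selected transmission succeeds. Iterating from the trivial base $|S_k(\kappa_{\max})|\leq N$ gives the bound $\kappa_{\max}+\lceil\alpha^{-1}\rceil$, marginally tighter than the paper's $2\bar{\kappa}$ and equally $\mathcal{O}(\alpha^{-1})$. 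Your handling of the two delicate points --- that the recursion could fail if served agents were drawn from below level $m$, and the initialization when ages do not start at zero --- is exactly right; the only cosmetic issue is that you invoke $\kappa_{\max}\geq\max_i\underline{\kappa}^i$, which holds because $\overline{\lambda}^*\geq\underline{\lambda}^*$ makes each $\overline{\kappa}^i\geq\underline{\kappa}^i$, but what the argument actually needs is the stronger and equally true fact that $\kappa_{\max}\geq\overline{\kappa}^i$ for every $i$.
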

The proof can be found in Appendix \ref{Proof:PropDeltaBounded}. As a result of the above proposition, we next prove that MATB-P and the relaxed policy approach each other as $N \rightarrow \infty$, which would then (as a result of \eqref{Cost_comparison}) imply that MATB-P is asymptotically optimal for Problem \ref{Prob:Cap_Cons}. To this end, we define an auxiliary policy $\hat{\gamma}$, %which transmits exactly the same agents as the 
under which the AoI sample paths are the same as those under the relaxed policy $\gamma_R$, but for each additional agent that is not supposed to be transmitted by MATB-P, it adds a penalty to the cost as: %existing cost as: 
\begin{align}\label{Age_Penalty}
    \omega(y,A,C_W) =&  c(\bar{\Delta},A,C_{W}) \times \mathbf{1}_{\{(1-\frac{\mathcal{C}}{n^\lambda_k}) >0\}}\mathbf{1}_{\{\tau \geq y\}}.
\end{align}
Further, we let $\{\tilde{\tau}^i_k\}_{k=1}^\infty$ and $\{\tau^i_k\}_{k=1}^\infty$ to be the sequences of AoIs of the $i^{th}$ agent under MATB-P and $\gamma^{i}_R$ (or equivalently $\hat{\gamma}^{i}$), respectively. Then, it is easy to see that $\omega(\tilde{\tau}^i(t),A_i,C_{W^i})$ dominates $c(\tilde{\tau}^i(t),A_i,C_{W^i})$, $\forall i,k$. 
As a consequence, it follows that
\begin{align}\label{Cost_comparison}
    J^{BS}({\gamma_R}) \leq J^{BS}({\gamma^*}) \leq J^{BS}({\gamma}) \leq J^{BS}({\hat{\gamma}}), 
\end{align}
where $\gamma^*$ is any optimal policy that solves Problem \ref{Prob:Cap_Cons}. Then, we have the following result.

% Define $\bar{a}:=\max_{\Phi}\|A(\phi_i)\|_F$, $C_W:=\max_{\Phi}C_{W}(\phi_i)$, $\kappa:= \max_{\Phi}\kappa(\phi_i)$, $m(\bar{a}):= \bar{a}^{2\kappa+2}(\bar{a}^2-1)^2$ and let $\operatorname{U} > 0$ be a constant.

% Then, we have the following assumption.

% \begin{assumption}\label{As_For_k}
% It holds that
%     \begin{align*}
% & m(\Bar{a})(\bar{\Delta}_N + 2\kappa + (2-\Bar{a}^2)\Bar{a}^{2\kappa+2}) \Bar{a}^{2\Bar{\Delta}_N} = \operatorname{U} \frac{(\Bar{a}^2-1)^4}{tr(C_W)} N^{1/4}
%     % & \log \Bar{a}^2  \frac{\log \Bar{a}^2}{-\log p} \log \left( m(a)\left(\frac{\log 2N/\delta}{-\log p} \right) + 2\kappa + (2-\Bar{a}^2)\Bar{a}^{2\kappa+2}\right) \nonumber\\
%     % & = \log N^{1/4} + \log \left( \operatorname{U}(\bar{a},p)(\Bar{a}^2-1)^4 / tr(C_W)\right),
% \end{align*}
% \end{assumption}

% \begin{assumption}\label{AsAbound}
% The inequality $0 < \|A(\theta)\|_F < \sqrt{1/(1-\alpha)}$ holds $\forall \theta \in \Theta$, where $\alpha = \frac{R_d}{N}$.
% \end{assumption}
% We remark that the Assumption \ref{AsAbound} entails finite scheduling cost under the hard-bandwidth policy, and further insights on it are presented later (in Remark \ref{remark_AboundedAssump}).

\begin{theorem}\label{Th:Asymptotic_optimality_deterministic}
Let $\alpha$ be fixed and suppose that Assumption \ref{As:sys_param} holds. Then, the difference in the scheduling cost under MATB-P and $\gamma_R$ converges to 0 exponentially fast as a function of $N$. Consequently, as $N \rightarrow \infty$, MATB-P becomes asymptotically optimal for Problem \ref{Prob:Cap_Cons}.
\end{theorem}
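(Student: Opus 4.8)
The plan is to exploit the sandwich \eqref{Cost_comparison}: since $J^{BS}(\gamma_R) \le J^{BS}(\gamma^*) \le J^{BS}(\gamma) \le J^{BS}(\hat\gamma)$, it is enough to prove the single estimate $J^{BS}(\hat\gamma) - J^{BS}(\gamma_R) \to 0$ at an exponential rate in $N$; the claimed $\varepsilon$-optimality of MATB-P for Problem~\ref{Prob:Cap_Cons} then follows with $\varepsilon := J^{BS}(\hat\gamma) - J^{BS}(\gamma_R)$. By the definition of the auxiliary policy $\hat\gamma$, this difference is exactly the long-run time-averaged expected age-penalty, $J^{BS}(\hat\gamma) - J^{BS}(\gamma_R) = \overline{\lim}_{T\to\infty}\frac{1}{T}\frac{1}{N}\sum_{k=0}^{T-1}\mathbb{E}\big[\sum_{i=1}^N \omega(\tilde\tau^i_k,A_i,C_{W^i})\big]$, so the whole problem reduces to controlling the inner expectation uniformly in $k$.

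First I would bound the per-step penalty. By \eqref{Age_Penalty}, each $\omega(\cdot)$ is nonzero only on the overflow event $\{n^\lambda_k > \mathcal{C}\}$, where it equals the worst-case running cost $c(\bar\Delta,\cdot,\cdot)$; moreover the number of agents that can be charged at step $k$ is at most the overflow count $(n^\lambda_k - \mathcal{C})^+$, since those are precisely the agents in $\Lambda_k\setminus\Lambda^{max}_k$. Here $\bar\Delta = \mathcal{O}(\alpha^{-1})$ is the uniform AoI bound of Proposition~\ref{Prop:Delta_boundedness}, and since $\alpha = \mathcal{C}/N$ is held fixed, $\bar\Delta$ does not grow with $N$; hence $\bar c := \max_{\phi\in\Phi} c(\bar\Delta, A(\phi), C_W(\phi))$ is a finite constant independent of $N$ (for $p=0$ this is immediate as the weights form a finite sum; in the erasure case this is exactly where Assumption~\ref{As:sys_param} is needed, to keep the geometric series defining the weights convergent, and one uses the high-confidence age bound of Theorem~\ref{Prop:Delta_High_deviation_bounded} in place of $\bar\Delta$). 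Therefore $\sum_{i=1}^N \omega(\tilde\tau^i_k,A_i,C_{W^i}) \le \bar c\,(n^\lambda_k - \mathcal{C})^+$, and combining with the sandwich, $0 \le J^{BS}(\gamma) - J^{BS}(\gamma_R) \le \frac{\bar c}{N}\,\overline{\lim}_{T\to\infty}\frac{1}{T}\sum_{k=0}^{T-1}\mathbb{E}\big[(n^\lambda_k - \mathcal{C})^+\big]$.

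It remains to show that the expected overflow decays exponentially in $N$ in the stationary regime. Since the chains $\{\tau^i_k\}$ induced by $\gamma_R$ are irreducible with unique stationary distributions $\pi^i$, I would pass to stationarity; there, conditionally on the agents' mutually independent randomization variables, the scheduling indicators $a^i_k = \gamma^i_R(I^{BS}_k)$ are independent across $i$ and $\mathbb{E}[n^\lambda_k] = \sum_{i=1}^N R^i_0 \le \mathcal{C}$ by feasibility of $\gamma_R$ for Problem~\ref{Prob:Relaxed_cons}. One then needs a sharp large-deviations estimate showing that $n^\lambda_k = \sum_{i=1}^N a^i_k$ concentrates at or below $\mathcal{C} = \alpha N$ tightly enough that $\mathbb{P}(n^\lambda_k > \mathcal{C}) \le e^{-c_0 N}$ for a fixed $c_0>0$, uniformly in $k$; since $n^\lambda_k \le N$ this gives $\mathbb{E}[(n^\lambda_k - \mathcal{C})^+] \le N e^{-c_0 N}$, whence $0 \le J^{BS}(\gamma) - J^{BS}(\gamma_R) \le \bar c\, e^{-c_0 N}$ and, via the sandwich, $|J^{BS}(\gamma) - J^{BS}(\gamma^*)| \le \bar c\, e^{-c_0 N}$, i.e., MATB-P is asymptotically optimal for Problem~\ref{Prob:Cap_Cons}.

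I expect this concentration step to be the main obstacle. A Chernoff/Hoeffding bound on the conditionally independent $\{0,1\}$-valued $a^i_k$ is the natural tool, but to obtain a genuinely exponential rate one must exploit that the aggregate demand sits on the feasible side of $\mathcal{C}$ with a margin of order $N$ — a naive randomized-rounding argument only controls fluctuations at scale $\sqrt N$ and thus recovers the $\mathcal{O}(N^{-1/2})$ rate of \cite{aggarwal2022weighted} — and it is precisely here that the uniform age bound of Proposition~\ref{Prop:Delta_boundedness} and the maximum-age tie-break are doing the work. The remaining pieces (that $\hat\gamma$'s cost dominates both that of $\gamma$ and that of $\gamma_R$, via the domination of $c$ by $\omega$ and \eqref{Cost_comparison}, and the interchange of $\overline{\lim}_T$ with the expectations and bounds above) are comparatively routine, following from the uniform-in-$k$ nature of the estimates and ergodicity of the induced chains; the same blueprint then delivers Theorem~\ref{Th:erasure_system} for $p>0$, with Assumption~\ref{As:sys_param} ensuring $\bar c<\infty$.
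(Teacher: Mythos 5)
Your skeleton is the same as the paper's: sandwich the MATB-P cost between $J^{BS}(\gamma_R)$ and $J^{BS}(\hat\gamma)$ via \eqref{Cost_comparison}, identify the gap $J^{BS}(\hat\gamma)-J^{BS}(\gamma_R)$ with the time-averaged penalty \eqref{Age_Penalty}, pull out the constant $\operatorname{U}=\max_i c(\bar\Delta,A_i,C_{W^i})$ (finite by Proposition \ref{Prop:Delta_boundedness} since $\alpha$ is fixed), and reduce everything to the overflow probability $\mathbb{P}(n^\lambda_k>\mathcal{C})$. But you then stop exactly at the step that carries the entire content of the theorem: you write that one ``needs a sharp large-deviations estimate'' giving $\mathbb{P}(n^\lambda_k>\mathcal{C})\le e^{-c_0N}$, correctly observe that $\mathbb{E}[n^\lambda_k]\le\mathcal{C}$ alone only controls fluctuations at scale $\sqrt{N}$, and leave the existence of the required order-$N$ margin unproven. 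That is a genuine gap, not a routine detail --- without it the argument recovers only the $\mathcal{O}(N^{-1/2})$ rate of the prior work, i.e., nothing beyond what was already known.

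The paper closes this step with an explicit Chernoff computation: since the $a^i_k=\gamma^i_R(I^{BS}_k)$ are i.i.d.\ Bernoulli across $i$ (independent because the relaxed problem decouples the agents, identically distributed because the randomization parameter $q$ in \eqref{Randomized_policy} is common to all agents at each $k$), one has $\inf_{\theta>0}\mathbb{E}[e^{\theta n^\lambda_k}]e^{-\theta\mathcal{C}}=\inf_{\theta>0}(\mathbb{E}[e^{\theta a^i_k}])^N e^{-\theta\alpha N}=e^{-\operatorname{D}(\alpha\,\|\,q)N}$, where $\operatorname{D}(\cdot\|\cdot)$ is the Bernoulli Kullback--Leibler divergence; the exponent is strictly positive because $\operatorname{D}(\alpha\|q)=0$ would force $\alpha=q$, which would require $\overline{\mathcal{C}}=0$ and is ruled out by the constraint on $R(\lambda)$ from Problem \ref{Prob:Relaxed_cons}. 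Note also that you attribute the margin to Proposition \ref{Prop:Delta_boundedness} and to the maximum-age tie-break; in the paper's proof those only supply the bounded penalty constant $\operatorname{U}$ and the inclusion of penalized agents in the overflow event, respectively --- the exponential rate comes solely from the i.i.d.\ structure of the $a^i_k$ under $\gamma_R$ and the strict gap $\alpha\neq q$. To complete your proof you must either reproduce that Chernoff--KL step or supply an alternative argument establishing a margin of order $N$ between the typical value of $n^\lambda_k$ and $\mathcal{C}$.
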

The proof of Theorem~\ref{Th:Asymptotic_optimality_deterministic} is provided in Appendix \ref{Ap:asymp_opt}. Next, we provide a remark on Proposition \ref{Prop:Delta_boundedness} and Theorem \ref{Th:Asymptotic_optimality_deterministic}.

% \begin{remark}
% We remark here that \eqref{Parameter_dependent_bound} depends on the combination of instability of the agents and the erasure probability. This reinstates the fact that under channel erasure, we cannot stabilize arbitrarily unstable agents, which has also been seen in different contexts in the literature \cite{imer2006optimal,moon2014discrete}.
% \end{remark}
\begin{remark}
    % We make a couple of remarks here. 
    As a consequence of Proposition \ref{Prop:Delta_boundedness}, in the case of deterministic channel, no assumptions are needed on the system parameters to prove the asymptotic optimality of the MATB protocol. This is thus a significant relaxation of the result given in \cite{aggarwal2022weighted}, where an upper bound on $\|A(\theta)\|_F$ was required. Second, we note that Theorem \ref{Th:Asymptotic_optimality_deterministic} proposes an exponential order of convergence of MATB-P toward optimality as $N \rightarrow \infty$, which is sharper than the $\mathcal{O}(1/\sqrt{N})$ convergence bound obtained in \cite{aggarwal2022weighted}.
\end{remark}
Next, we proceed to the general case with $p >0$. The following theorem shows that under MATB-P, the AoI takes large values with arbitrarily small probability.

\begin{theorem}\label{Prop:Delta_High_deviation_bounded}
    Let $\alpha$ be fixed and $p > 0$. Then, given $\delta \in (0,1)$, the upper confidence bound on AoI $\tau^i_k =  \Os(\log(1/\delta)),$ $ \forall i$, $ \forall k$ with probability at least $1-\delta$. %\strike{where $\tilde{\Delta}(\delta)$ is the confidence bound on AoI depending on $\delta$}.
\end{theorem}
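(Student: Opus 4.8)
The plan is to show that in the stationary regime the AoI of every agent under MATB-P has an essentially geometric tail whose decay rate is bounded away from $1$ uniformly in $N$, and then to set the confidence level to be of order $\log(1/\delta)$. Fix an agent $i$ and a time $k$, and let $\kappa_{\max}:=\max_{j\in[N]}\overline\kappa^j(\overline\lambda^*)$; by Assumption~\ref{As:sys_param} and the computation of $\kappa$ in Appendix~\ref{ap:Compute_kappa}, $\kappa_{\max}$ is finite and independent of $N$. If $\tilde\tau^i_k\ge\ell$, then agent $i$ did not reset on any of the steps $k-\ell+1,\dots,k$; since its AoI is non-decreasing between resets, at most a prefix of $\kappa_{\max}$ of those steps can have $\tilde\tau^i_t<\overline\kappa^i$, so on the remaining $\ge\ell-\kappa_{\max}$ steps the relaxed policy certainly wants to transmit $i$, i.e.\ $i\in\Lambda_t$, and the failure to reset is caused by either (a) an erasure, which conditionally on the past has probability $p$, or (b) $n^\lambda_t>\mathcal C$ and $i$ being bumped out of $\Lambda_t$.

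The delicate mode is (b). The structural fact I would use is that, once $\tilde\tau^i_t\ge\kappa_{\max}$, the AoI of $i$ dominates that of every agent in $\Lambda_t$ lying below its own threshold, so by the maximum-age tie-break $i$ is bumped at step $t$ only if $N_m(t):=\#\{j\in[N]:\tilde\tau^j_t\ge m\}\ge\mathcal C$ with $m=\tilde\tau^i_t$. Hence it suffices to produce a constant level $m^\ast\ge\kappa_{\max}$ — which I claim can be taken of order $\log(1/\alpha)$, hence $N$-free — such that $\varepsilon_N:=\sup_{t}\sup_{m\ge m^\ast}\Pr[N_m(t)\ge\mathcal C]\to0$ as $N\to\infty$. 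Granting this, an agent whose AoI has reached $m^\ast$ fails to reset on a given step with conditional probability at most $p+\varepsilon_N$, so $\Pr[\tilde\tau^i_k\ge\ell]\le(p+\varepsilon_N)^{\ell-m^\ast}\le\big(\tfrac{1+p}{2}\big)^{\ell-m^\ast}$ for $N$ large, and taking $\ell=m^\ast+\big\lceil\log(1/\delta)/\log\tfrac{2}{1+p}\big\rceil=\mathcal O(\log(1/\delta))$ makes the right-hand side $\le\delta$, uniformly in $i$ and $k$; this is precisely the claimed upper confidence bound.

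The main obstacle is the concentration of the high-AoI count $N_m(t)$. Under the relaxed policy $\gamma_R$ the per-agent chains are independent with stationary, geometrically decaying marginals $\pi^j(\tau\ge m)\le C_1\rho_1^{\,m}$, $\rho_1\in(0,1)$ depending only on $p,q$, so the expectation of $N_m(t)$ under $\gamma_R$ is at most $NC_1\rho_1^{\,m}$, which drops below $\mathcal C/2=\alpha N/2$ once $m\ge m^\ast=\mathcal O(\log(1/\alpha))$; but under MATB-P the chains are \emph{coupled} through the scheduler, and since the mean of $n^\lambda_t$ under $\gamma_R$ is exactly $\mathcal C$, bumping itself is common — only bumping of a \emph{high}-AoI agent is rare, which is what the count bound captures. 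I would close the gap by coupling MATB-P and $\gamma_R$ on a common probability space: their sample paths differ only at bump events, numbering $(n^\lambda_t-\mathcal C)^+\le|n^\lambda_t-\mathcal C|$ per step, and $n^\lambda_t$ — a sum of $N$ conditionally independent indicators of mean $\mathcal C\,(1+o(1))$ by Theorem~\ref{Th:Asymptotic_optimality_deterministic} and \eqref{Cost_comparison} — fluctuates on the scale $\sqrt N$; a disagreement-coupling/bounded-differences estimate over a window of length $\ell$ then yields that the expectation of $N_m(t)$ under MATB-P is at most $NC_1\rho_1^{\,m}+\mathcal O(\sqrt N)$ and $\varepsilon_N\le e^{-\Omega(\alpha^2 N/\ell^3)}\to0$, completing the proof. (A shortcut is available when some $A_i$ has an eigenvalue off the closed unit disk: $J^{BS}(\gamma)\le J^{BS}(\hat\gamma)$ is uniformly bounded in $N$ by Theorem~\ref{Th:Asymptotic_optimality_deterministic} and \eqref{Cost_comparison}, so $\mathbb E[c(\tilde\tau^i_k)]=\mathcal O(1)$; since $c(\ell)=\ell\sum_{r=1}^{\ell}\mathrm{tr}\big({A^{r-1}_i}^\top A^{r-1}_iC_{W^i}\big)$ grows exponentially in $\ell$, Markov's inequality gives $\Pr[\tilde\tau^i_k\ge\ell]=\mathcal O(1)/c(\ell)\le\delta$ for $\ell=\mathcal O(\log(1/\delta))$, bypassing the count argument.)
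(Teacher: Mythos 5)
Your overall plan---show that once an agent's AoI is high, each further step fails to reset it only through an erasure or a ``bump,'' and that bumps of high-AoI agents are rare---is reasonable in spirit, but two of its load-bearing steps are not actually established, and one of them I do not believe can be repaired in the form you state it. First, the bound $\PP[\tilde\tau^i_k\ge\ell]\le(p+\varepsilon_N)^{\ell-m^\ast}$ does not follow from your definition $\varepsilon_N:=\sup_{t}\sup_{m\ge m^\ast}\PP[N_m(t)\ge\mathcal C]$, which is a supremum of \emph{unconditional} probabilities: to telescope a product over the steps $t=k-\ell+m^\ast+1,\dots,k$ you need $\PP[\text{fail at }t\mid \text{failed at all earlier steps}]\le p+\varepsilon_N$, and conditioning on a long run of failures strongly biases the sample path toward exactly the congested configurations ($n^\lambda_t>\mathcal C$, many high-AoI agents) that make the next bump likely. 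The bump events are positively correlated across time, so the geometric bound is unjustified. Second, the concentration statement $\varepsilon_N\le e^{-\Omega(\alpha^2N/\ell^3)}$ for the high-AoI count under MATB-P is the crux of the whole argument and is only asserted; the proposed disagreement coupling with $\gamma_R$ must itself control how bumps propagate AoI discrepancies over the window, which is essentially the quantity you are trying to bound. Your ``shortcut'' also does not go through as stated: boundedness of $J^{BS}(\hat\gamma)$ only controls the \emph{time- and population-averaged} expectation of $c(\tau^i_k)$ (it is a $\overline{\lim}$ of Ces\`aro averages with a $1/N$ factor), not $\mathbb{E}[c(\tilde\tau^i_k)]$ for each fixed $i$ and $k$, so Markov's inequality does not yield the claimed per-$k$, per-$i$ tail bound; moreover it would cover only agents with $\|A_i\|>1$.

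The paper's proof avoids both difficulties by a different decomposition. It splits the event $\{\tau^i_k>x\}$ into (i) the agent taking more than $x/2$ steps to re-enter the set of agents due for transmission and (ii) taking more than $x/2$ steps to be successfully transmitted once there. Event (ii) is a clean geometric $p^{x/2}$ because erasures are conditionally i.i.d. Event (i) is bounded by the probability that fewer than $N-\mathcal C-1$ \emph{successful transmissions} occur in $x/2$ slots; the total success count is a sum of independent $\mathrm{Bin}(\mathcal C,1-p)$ variables regardless of \emph{which} agents are scheduled, so its concentration (via Berry--Esseen) requires no control of the joint AoI profile or of the correlations you are struggling with. The price is that the guarantee holds for $N$ large enough as a function of $\delta$, but that matches the asymptotic regime of the theorem. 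If you want to salvage your route, you would need to replace the unconditional $\varepsilon_N$ with a uniform conditional bound (e.g., via a drift/Lyapunov argument on the empirical AoI distribution), which is a substantially harder undertaking than the counting argument above.
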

The proof of the theorem can be found in Appendix~\ref{appen_prop3}. Further, it shows that the AoI has a vanishing tail under MATB-P, which can be used to give high-probability guarantees on the freshness of information under high traffic. Next, to prove asymptotic optimality of MATB-P under the case with erasure-prone channel, we again consider an auxiliary policy $\Check{\gamma}$, which transmits agents according to $\gamma_R$, except that, for each agent which is not supposed to be transmitted by MATB-P, it adds an additional penalty to the cost, which (by a slight abuse of notation) is defined as:

\begin{small}
\begin{align*}%\label{Age_Penalty_with_erasure}
    \omega(y,A,C_W) =&  \sum_{\ell=1}^{\infty}p^\ell c(\tau+\ell,A,C_{W}) \times \mathbf{1}_{\{(1-\frac{\mathcal{C}}{n^\lambda_k}) >0\}}\mathbf{1}_{\{\tau \geq y\}},
\end{align*}
\end{small}such that $\omega(\tilde{\tau}^i_k,A_i,C_{W^i})$ dominates the expected WAoI $c(\tilde{\tau}^i_k,A_i,C_{W^i})$, for all $i$, $k$. Further, $\omega(y,A,C_W) < \infty$ as a consequence of Assumption \ref{As:sys_param}. Thus, using similar arguments as for Theorem \ref{Th:Asymptotic_optimality_deterministic}, we can prove the following main result.

\begin{theorem}\label{Th:erasure_system}
Let $\alpha$ be fixed and $0<p<1$. Then, MATB-P approaches the optimal policy $\gamma^*$ for Problem \ref{Prob:Cap_Cons} exponentially fast as the number of agents grows.
\end{theorem}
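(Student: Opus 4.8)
The plan is to mirror the argument of Theorem~\ref{Th:Asymptotic_optimality_deterministic}, with the uniform boundedness of the AoI (Proposition~\ref{Prop:Delta_boundedness}) replaced by the high-probability tail bound of Theorem~\ref{Prop:Delta_High_deviation_bounded}. First I would recall the chain of inequalities \eqref{Cost_comparison}, so that it suffices to show $J^{BS}(\check\gamma) - J^{BS}(\gamma_R) \to 0$ exponentially in $N$, since $J^{BS}(\check\gamma)$ upper bounds $J^{BS}(\gamma)$ and $J^{BS}(\gamma_R)$ lower bounds $J^{BS}(\gamma^*)$. By construction of $\check\gamma$, the difference $J^{BS}(\check\gamma) - J^{BS}(\gamma_R)$ is exactly the long-run average of the penalty $\omega(\tilde\tau^i_k,A_i,C_{W^i})$, summed over agents and normalized by $N$; this penalty is nonzero only on the event that the relaxed policy schedules more than $\mathcal{C}$ agents at time $k$ (i.e. $n^\lambda_k > \mathcal{C}$) \emph{and} agent $i$ is among those dropped by the tie-break, which forces $\tilde\tau^i_k$ to exceed the threshold and lie in the "overflow" tail.

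The core estimate is therefore a bound on $\mathbb{E}[\omega(\tilde\tau^i_k,A_i,C_{W^i})]$. I would split on whether $\tilde\tau^i_k$ exceeds the $\Os(\log(1/\delta))$ confidence level of Theorem~\ref{Prop:Delta_High_deviation_bounded}: on the high-probability event the penalty is controlled by a geometric-in-$\ell$ sum $\sum_{\ell\ge1} p^\ell c(\tilde\tau^i_k+\ell,A_i,C_{W^i})$, which converges because $\|A_i\|_F^2 p < 1$ by Assumption~\ref{As:sys_param} (this is precisely where that assumption, absent in the $p=0$ case, is needed — the running cost $c$ grows like $\|A_i\|_F^{2\tau}$ and must be damped by $p^\ell$); on the complementary event, of probability at most $\delta$, I bound the penalty crudely and choose $\delta$ to decay exponentially in $N$. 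Concretely, the event $\{n^\lambda_k > \mathcal{C}\}$ is a large-deviation event for a sum of (conditionally) independent Bernoulli scheduling indicators with mean $\alpha N = \mathcal{C}$, so Chernoff/Hoeffding gives it probability $e^{-\Omega(N)}$; combining this with the tail bound on the conditional AoI and the convergent geometric series yields $\mathbb{E}[\omega] = e^{-\Omega(N)}$ uniformly in $i$ and $k$. Averaging over $i$ (the $1/N$ normalization and the sum over $N$ agents cancel) and over $k$ in the Cesàro limit preserves the exponential rate, giving $J^{BS}(\check\gamma) - J^{BS}(\gamma_R) = e^{-\Omega(N)}$, hence $J^{BS}(\gamma) \to J^{BS}(\gamma^*)$ at the same rate.

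The main obstacle is making the large-deviation step rigorous: the scheduling indicators $a^j_k$ under $\gamma_R$ are \emph{not} i.i.d. across agents at a fixed $k$ because each is a deterministic (threshold) function of that agent's own AoI chain, and the randomization bit $q$ is shared only in the sense of the mixture construction — I must argue that, conditioned on the randomization outcome, the $\{a^j_k\}_j$ are independent across $j$ (they are, since the agents' AoI chains are driven by independent erasure sequences $\beta^j$ and independent randomization draws), so that a Chernoff bound applies to $n^\lambda_k$ with the correct mean. A secondary subtlety is uniformity in $k$: one should either work with the stationary distributions $\pi^i$ of the relaxed chains (which exist by irreducibility, as noted before the proposition) and show the bound holds in steady state, then transfer to the Cesàro average, or carry the bound along the transient directly; I would take the stationary route since the penalty's expectation under $\pi^i$ is what enters the long-run average cost. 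Finally one should check that the constant in $\Omega(N)$ does not degrade as $\alpha$ is held fixed but small — it will scale like $\alpha$ through the Chernoff exponent, which is acceptable since $\alpha$ is fixed.
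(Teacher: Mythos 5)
Your overall architecture is exactly the paper's: the paper gives no separate proof of Theorem~\ref{Th:erasure_system}, stating only that it follows ``using similar arguments as for Theorem~\ref{Th:Asymptotic_optimality_deterministic}'' once the penalty $\omega$ is redefined as the geometric sum $\sum_{\ell\ge 1}p^{\ell}c(\tau+\ell,\cdot,\cdot)$, made finite by Assumption~\ref{As:sys_param}. Your use of the sandwich \eqref{Cost_comparison}, the domination of $c$ by $\omega$, the Chernoff/KL bound $e^{-\operatorname{D}(\alpha\|q)N}$ on $\mathbb{P}(n^{\lambda}_k>\mathcal{C})$, and the observation that the $a^j_k$ are independent across agents because the relaxed problem decouples and the erasure sequences are independent, all match the derivation in Appendix~\ref{Ap:asymp_opt} verbatim. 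You also correctly isolate the one genuinely new difficulty relative to the $p=0$ case: the constant $\operatorname{U}=\max_i c(\bar{\Delta},A_i,C_{W^i})$ no longer exists because the AoI is unbounded, so one must control $\mathbb{E}\bigl[\omega(\tilde{\tau}^i_k,\cdot)\,\mathbf{1}_{\{n^{\lambda}_k>\mathcal{C}\}}\bigr]$ with a random, unbounded first factor.

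The one step that would fail as written is your plan to handle this by splitting on the confidence level of Theorem~\ref{Prop:Delta_High_deviation_bounded} and then ``choosing $\delta$ to decay exponentially in $N$.'' The proof of that theorem goes through Berry--Esseen and requires $N$ large enough that $0.3354(1-p+0.415)/\sqrt{\alpha N}\le \delta/4$, i.e.\ it only delivers confidence levels $\delta\gtrsim 1/\sqrt{\alpha N}$; you cannot push $\delta$ to $e^{-\Omega(N)}$ through that result, and doing so would degrade your final rate to polynomial. The repair is elementary and does not need Theorem~\ref{Prop:Delta_High_deviation_bounded} at all: under the relaxed policy the AoI chain displayed in Section~\ref{sec:Large_dev} resets with probability $1-p$ at every step once $\tau^i_k\ge\overline{\kappa}^i$, so $\tilde{\tau}^i_k$ has a geometric tail with parameter $p$ in stationarity, and Assumption~\ref{As:sys_param} (strengthened only by continuity to $\|A\|_F^{2(1+\varepsilon)}p<1$ for some small $\varepsilon>0$) gives $\mathbb{E}[\omega^{1+\varepsilon}]<\infty$ uniformly in $k$. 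A H\"older inequality then yields $\mathbb{E}[\omega\,\mathbf{1}_{\{n^{\lambda}_k>\mathcal{C}\}}]\le \bigl(\mathbb{E}[\omega^{1+\varepsilon}]\bigr)^{1/(1+\varepsilon)}\bigl(\mathbb{P}(n^{\lambda}_k>\mathcal{C})\bigr)^{\varepsilon/(1+\varepsilon)}=e^{-\Omega(N)}$, preserving the exponential rate. With that substitution your argument is complete and coincides with the paper's intended proof.
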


\begin{remark}
    We note here that in most literature such as \cite{ayan2019age,Ayan2020,hatami2022demand}, the authors rely on truncating the AoI state space to a sufficiently large value and consequently working with a finite space to derive the corresponding scheduling policies. Here, however, we do not require any such truncation on the state space. This is more natural since in communication systems with non-zero erasure probability, the AoI can always exceed the truncation value, even if the probability of the same tends to 0.  
\end{remark}
The solution to the original capacity-constrained problem is thus completely characterized, and we next proceed to solving the finite-agent game problem.

\section{Solution to Agent-Level Game Problem}
\label{sec:MFG_summ}
In this section, we solve the agent-level game problem by using the BS's scheduling policy as constructed in the previous section. Typically, networked problems involve a large number of users, and thus, belong to the class of \textit{large population} games. Characterizing Nash equilibria based on a centralized information structure introduced in Section \ref{sec:formulation} in such a setting is therefore unrealistic. Thus, the objective here is to characterize decentralized Nash policies for each agent. For that purpose, we first consider a limiting game (or the MFG) with a countably infinite number of players. Then, we characterize the equilibrium of the MFG (called the MFE) by utilizing the Nash certainty equivalence principle \cite{huang2007large}. As a result of the latter, each agent's effect on the aggregate behavior becomes negligible, which gives rise to the notion of a representative agent solving a decentralized  
 stochastic optimal control problem using only local information, by playing against the aggregate distribution. Consequently, we also show that the MFG solution provides an approximate Nash solution for the finite-agent game.

 \subsection{Decentralized Stochastic Optimal Tracking Problem}
Consider a generic agent of type $\phi$ from the infinite population, whose plant dynamics evolve as

\begin{align}\label{generic_dyn}
    X_{k+1} = A(\phi_i)X_k + B(\phi_i)U_k + W_k, ~k \geq 0,
\end{align}
 where $X_k \in \mathbb{R}^n$ and $U_k \in \mathbb{R}^m$ denote the state and control input of the generic agent, respectively. $W_k \in \mathbb{R}^n$ is an i.i.d zero mean Gaussian noise with positive definite covariance $C_W(\phi)$. The initial state $X_0$ has symmetric density with mean $x_{\phi,0}$ and covariance $\Sigma_x > 0$. The decoder and the controller information structures are same as in subsection \ref{sec:formulation}, except with the superscript $i$ removed. The objective of the controller is to minimize the function
 \begin{align}\label{MFG:cost}
     \!\!J(\xi,\mu)\!:=\! \overline{\lim\limits_{T \rightarrow \infty}} \frac{1}{T}\mathbb{E}\left[\sum_{k=0}^{T-1} \|X_k - \mu_k\|^2_{Q(\phi)}\! +\! \|U_k\|^2_{R(\phi)}  \right]\!\!,\!\!
 \end{align}
where the policy
% $\xi \in \Xi:= \{\xi \mid \xi \text{ is adapted to } \sigma(I^{d,con}_\ell), \ell = 0, \cdots, k \}$ and
$\xi \in \Xi$ is adapted to the decentralized information structure
$I^{d,con}_0:= Z_0,~I^{d,con}_k:= \{U_{0:k-1}, Z_{0:k}\}, \forall k \geq 1$ being the decentralized information structure of the generic agent. Note that this is different from the centralized information structure, which involves the information of all the other agents as well. Further, $\mu=(\mu_k)_{k \geq 0} \in \mathcal{M}:= \{\mu_k \in \mathbb{R}^n \mid \|\mu\|_{\infty}:= \sup_{k \geq 0} \|\mu_k\| \leq \infty\}$, also called the MF trajectory, denotes the infinite agent approximation to the consensus term $(\mu^N_k)_{k \geq 0}$ in \eqref{Finite_cost}. This term leads to decoupling between the otherwise cost-coupled agents in the finite-agent game, and the resulting problem becomes a linear-quadratic tracking (LQT) problem, for which the optimal policy is well known (and is provided in Proposition \ref{Prop:Optimal_Cntrl}).

Next, we introduce the operator $\Psi:\mathcal{M} \rightarrow \Xi$ which defines the mapping $\mu \mapsto \xi$ and the operator $\Theta: \Xi \rightarrow \mathcal{M}$, which defines the mapping $\xi \mapsto \mu$. While the former generates an optimal policy given a MF trajectory $\mu$, the latter computes a trajectory from a given control policy. 

The MFE can then be defined as the pair $(\xi^*,\mu^*)$ such that $\mu^*$ is the fixed point of the composite operator $\Theta \circ \Psi$, i.e., $\mu^* = \Theta \circ \Psi (\mu^*)$.
Now, we state the following proposition which characterizes the optimal control policy for the LQT problem of the generic agent.

\begin{proposition}\label{Prop:Optimal_Cntrl}
    Suppose that the Assumption \ref{As:sys_param} holds and consider the dynamics \eqref{generic_dyn} with cost \eqref{MFG:cost}. Then, the following are true:
    \begin{enumerate}
        \item The optimal control action of the generic agent is:
        \begin{align}\label{Optimal_Cntrl}
            U^*_k = -K_1(\phi)Z_k - K_2(\phi)g_{k+1}
        \end{align}
        where $K_2(\phi) =   (R(\phi) +B(\phi)^\top K_1(\phi)B(\phi))^{-1}B(\phi)^\top$, 
$K_1(\phi) = K_2(\phi)K(\phi)A(\phi)$,
and $K(\phi) > 0$ is the unique solution to
\begin{align*}
K(\phi) \!= \!A(\phi)^\top \![K(\phi)A(\phi) \!-\! K(\phi)^\top\! B(\phi)K_1 (\phi)] \!+\! Q(\phi).
\end{align*}
Further, the trajectory $g_k$ satisfies the backward dynamics $g_k = A_{cl}(\phi)^\top g_{k+1} Q(\phi)\mu_k$, with the initial condition $g_0 = -\sum_{j=0}^{\infty}{(A_{cl}(\phi)^j})^\top$ $Q(\phi)\mu_j$ and $A_{cl}(\phi) = A(\phi) - B(\phi)K_1(\phi)$ being Hurwitz. In addition, the dynamics for $g_k$ has a unique solution in $\mathcal{M}$, which can be given as $g_k = -\sum_{j=k}^{\infty}{(A_{cl}(\phi)^{j-k})}^\top Q(\phi)\mu_j$.

\item The optimal cost is bounded above as:
\begin{align}\label{LQGCost}
    & J(\xi,\mu^*) \!\leq \!tr(K(\phi)C_W(\phi))\! +\! \overline{\lim\limits_{T \rightarrow \infty}} \frac{1}{T}\!\sum_{k=0}^{T-1}\!\!\xi_k^\top \!Q(\phi) \xi_k \nonumber \\ &  -  g_{k+1}^\top B(\phi)K_2(\phi)g_{k+1} \!+ \!\|A(\phi)^\top\!\! K(\phi)^\top\!\! B(\phi)K_1(\phi)\| \nonumber \\ & \times \left( \!\sum_{m=1}^{\hat{\kappa}}\sum_{r=1}^{m}tr(A(\phi)^{{r-1}^\top}\!\!A(\phi)^{r-1}C_W(\phi))\! \right. \nonumber \\
    & \left. +\! \frac{\|C_W(\phi)\|_F}{\|A(\phi)\|_F^2\!-\!1}\! \times\! \left[\frac{\|A(\phi)\|^{2\hat{\kappa} + 2}_F p}{1-\|A(\phi)\|^2_Fp} - \frac{p}{1-p}\right] \right).
    \end{align}
    \end{enumerate}
\end{proposition}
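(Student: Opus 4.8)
The plan is to split the argument into the two stated parts. For part (1), I would recognize that, after introducing the MF trajectory $\mu$, the generic agent's problem \eqref{generic_dyn}--\eqref{MFG:cost} is an infinite-horizon LQT problem driven by the decoder's estimate $Z_k$ rather than the true state $X_k$. The natural route is to invoke a certainty-equivalence/separation argument: since the decoder produces the MMS estimate and the controller only has access to $I^{d,con}_k$, the optimal control is affine in $Z_k$ plus a feedforward term capturing the deterministic tracking signal $\mu$. First I would write the discrete-time LQT value function ansatz $V_k(x) = x^\top K(\phi) x + 2 g_k^\top x + (\text{const})$, plug it into the Bellman/dynamic-programming recursion for the cost \eqref{MFG:cost}, and match quadratic and linear terms. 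Matching the quadratic part yields the algebraic Riccati equation for $K(\phi)$ stated in the proposition, and hence the gains $K_1(\phi), K_2(\phi)$ and the closed-loop matrix $A_{cl}(\phi) = A(\phi) - B(\phi)K_1(\phi)$; controllability of $(A(\phi),B(\phi))$ and observability of $(A(\phi),\sqrt{Q(\phi)})$ give the unique positive (semi)definite stabilizing solution so that $A_{cl}(\phi)$ is Hurwitz. Matching the linear part yields the backward recursion $g_k = A_{cl}(\phi)^\top g_{k+1} + Q(\phi)\mu_k$ (the excerpt's display seems to have a dropped ``$+$''), and since $A_{cl}(\phi)$ is Hurwitz and $\mu \in \mathcal{M}$ is bounded, the unique bounded solution is the convergent series $g_k = -\sum_{j=k}^{\infty}(A_{cl}(\phi)^{j-k})^\top Q(\phi)\mu_j$, giving both the closed form and the stated initial condition $g_0$. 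Substituting the optimal affine control back into \eqref{generic_dyn} and into the per-stage cost establishes \eqref{Optimal_Cntrl}.

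For part (2), I would propagate the value-function identity forward: the long-run average cost equals $tr(K(\phi)\Sigma_e^\infty) + (\text{feedforward/tracking terms})$, where $\Sigma_e^\infty$ is the stationary covariance of the estimation error $e_k = X_k - Z_k$ seen by the closed-loop system. The key substitution is that the cost naturally decomposes into (a) a term $tr(K(\phi)C_W(\phi))$ coming from the driving noise when the estimate is perfect, (b) the deterministic tracking residue $\overline{\lim}\frac{1}{T}\sum_k \xi_k^\top Q(\phi)\xi_k - g_{k+1}^\top B(\phi)K_2(\phi)g_{k+1}$ (here $\xi = \Psi(\mu)$ or the relevant tracking trajectory), and (c) a correction term of the form $\|A(\phi)^\top K(\phi)^\top B(\phi)K_1(\phi)\|$ times the stationary mean estimation error $\mathbb{E}[\|e_k\|^2]$ induced by the intermittent, erasure-prone scheduling. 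To bound (c) I would use the AoI representation $w^i_k = \sum_{\ell=1}^{\tau^i_k} tr({A^{\ell-1}_i}^\top A^{\ell-1}_i C_{W^i})$ already established for the BS problem, replace $\tau$ by its threshold-policy stationary law under $\gamma_R$ — which under the threshold $\hat\kappa$ puts AoI at $\hat\kappa$ and then geometrically decaying mass $p^\ell$ on $\hat\kappa + \ell$ — and sum the resulting double series. Splitting the sum at $\hat\kappa$ produces exactly the two bracketed pieces in \eqref{LQGCost}: the finite sum $\sum_{m=1}^{\hat\kappa}\sum_{r=1}^m tr(A(\phi)^{{r-1}^\top}A(\phi)^{r-1}C_W(\phi))$ for the guaranteed-fresh part, and the geometric tail $\frac{\|C_W(\phi)\|_F}{\|A(\phi)\|_F^2 - 1}\big[\frac{\|A(\phi)\|_F^{2\hat\kappa+2}p}{1-\|A(\phi)\|_F^2 p} - \frac{p}{1-p}\big]$ for the erasure-induced excess, where the submultiplicativity $\|A^{\ell}\|_F^2 \le \|A\|_F^{2\ell}$ and the geometric-series sums converge precisely because Assumption \ref{As:sys_param} ($\|A\|_F^2 p < 1$) holds.

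The main obstacle I anticipate is (c): carefully justifying that the average estimation-error cost contributed by the controller's use of the stale estimate $Z_k$ is correctly captured by the single coefficient $\|A(\phi)^\top K(\phi)^\top B(\phi)K_1(\phi)\|$ times $\mathbb{E}[\|e_k\|^2]$, and that the inequality (rather than equality) in \eqref{LQGCost} comes from bounding the cross terms between the error process and the Riccati quadratic form via Cauchy--Schwarz / norm inequalities. In particular one must be careful that the estimation error $e_k$ at the decoder is driven both by the accumulated process noise since the last successful reception and by the unreliable-channel statistics, and that these are independent of the feedforward signal $g_k$; handling this cleanly is what lets the cost decouple into the three additive pieces. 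The convergence of all infinite series, and the sign bookkeeping in the $\frac{p}{1-p}$ subtraction (which accounts for the fact that even the ``ideal'' part already includes one step of delay), are the remaining technical points, but they reduce to the geometric-series manipulations enabled by Assumption \ref{As:sys_param} and by $A_{cl}(\phi)$ being Hurwitz.
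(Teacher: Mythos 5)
Your proposal is correct and follows essentially the same route as the paper: part (1) is the standard LQT/certainty-equivalence derivation (which the paper simply imports from its earlier reference, and where you rightly spot the dropped ``$+$'' in the $g_k$ recursion), and part (2) substitutes the optimal control into the cost to isolate the term $\|A(\phi)^\top K(\phi)^\top B(\phi)K_1(\phi)\|\,\mathbb{E}[\|e_k\|^2]$, then bounds $\mathbb{E}[\|e_k\|^2]$ by splitting the AoI sum at $\hat\kappa$ and controlling the geometric tail via $\|AB\|_F\le\|A\|_F\|B\|_F$ and Assumption \ref{As:sys_param}, exactly as in the paper's displays \eqref{temp_costLQ0}--\eqref{temp_costLQ}. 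No substantive differences.
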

\begin{proof}
    The proof of 1) follows from \cite{aggarwal2022weighted}. For the proof of 2), we substitute \eqref{Optimal_Cntrl} in \eqref{MFG:cost}, to arrive at
\begin{align}\label{temp_costLQ0}
    & J(\xi,\mu^*) \!\leq \!tr(K(\phi)C_W(\phi))\! +\! \overline{\lim\limits_{T \rightarrow \infty}} \frac{1}{T}\!\sum_{k=0}^{T-1}\!\!\xi_k^\top \!Q(\phi) \xi_k \nonumber \\ &  -  g_{k+1}^\top B(\phi)K_2(\phi)g_{k+1}  \nonumber \\ & + \overline{\lim\limits_{T \rightarrow \infty}} \frac{1}{T}\sum_{k=0}^{T-1}\!\|A(\phi)^\top\!\! K(\phi)^\top\!\! B(\phi)K_1(\phi)\|\mathbb{E}\left[\|e_k\|^2 \right].
\end{align}
% Further, we have
% \begin{align*}
%     & \mathbb{E}\left[\|e_k\|^2 \right] \leq   \sum_{r=1}^{\overline{\kappa}}tr(A(\phi)^{{r-1}^\top}A(\phi)^{r-1}C_W(\phi)) \nonumber \\
%     & + \sum_{r=\overline{\kappa}+1}^k tr(A(\phi)^{{r-1}^\top}A(\phi)^{r-1}C_W(\phi))p^{k-\Bar{\kappa}-1},
%     % & \leq T\sum_{r=1}^{\overline{\kappa}^{\phi}(\overline{\lambda}^*)}tr(A(\theta)^{{r-1}^\top}A(\theta)^{r-1}K_W(\theta)) \nonumber \\ & + \sum_{k=\overline{\kappa}^{\phi}(\overline{\lambda}^*)+1}^{T-1} \|K_W(\theta)\|_F\frac{(1-\|A(\theta)\|_F^{2(k-\tau_u-1)})(1-\alpha)^{k-\overline{\kappa}^{\phi}(\overline{\lambda}^*)-1}}{1-\|A(\theta)\|_F^2} \nonumber \\
% \end{align*}
% where $\bar{\kappa}:= \overline{\kappa}^{\phi}(\overline{\lambda}^*)$ and the inequality follows using the hard-bandwidth policy of Section \ref{subsec:hard_cons_pol}. Then, using Assumptions \ref{As:sys_param}-\ref{As_For_k}, summing over 0 to $T-1$, dividing by T, taking limsup, and substituting in \eqref{temp_costLQ0}, we get \eqref{LQGCost}.

Consider the following:
\begin{align}\label{temp_costLQ}
& \mathbb{E}\left[\|e_k\|^2 \right] \!\!=\!\! \sum_{m=1}^\infty \sum_{r=1}^{m}tr(A(\phi)^{{r-1}^\top}\!\!A(\phi)^{r-1}C_W(\phi)) \mathbb{P}(\tau_k = m) \nonumber \\
    & \leq   \sum_{m=1}^{\hat{\kappa}}\!\sum_{r=1}^{m}tr(A(\phi)^{{r-1}^\top}\!\!A(\phi)^{r-1}C_W(\phi)) \nonumber \\
    & +\!  \!\sum_{m=\hat{\kappa}+1}^{\infty} \sum_{r=1}^m \!tr(A(\phi)^{{r-1}^\top}\!\!A(\phi)^{r-1}C_W(\phi))p^{m-\tau_u} \nonumber \\
    % & \leq T\sum_{r=0}^{\tau_u}tr(A(\theta)^{{r-1}^\top}A(\theta)^{r-1}K_W(\theta)) \nonumber \\ & + \sum_{k=\tau_u+1}^{T-1} \|K_W(\theta)\|_F\frac{(1-\|A(\theta)\|_F^{2(k-\tau_u-1)})(1-\alpha)^{k-\tau_u-1}}{1-\|A(\theta)\|_F^2} \nonumber \\
    & \!\! \leq \!\sum_{m=1}^{\hat{\kappa}}\sum_{r=1}^{m}tr(A(\phi)^{{r-1}^\top}\!\!A(\phi)^{r-1}C_W(\phi))\! \nonumber \\
    & +\! \frac{\|C_W(\phi)\|_F}{\|A(\phi)\|_F^2\!-\!1}\! \times\! \left[\frac{\|A(\phi)\|^{2\hat{\kappa} + 2}_F p}{1-\|A(\phi)\|^2_Fp} - \frac{p}{1-p}\right],
\end{align}
where $\hat{\kappa}:= \overline{\kappa}^{\phi}(\overline{\lambda}^*)$ and the last inequality follows using the scheduling policy of Section \ref{subsec:hard_cons_pol}, Assumption \ref{As:sys_param}, and the fact that $\|AB\|_F \leq \|A\|_F\|B\|_F$. Then, combining \eqref{temp_costLQ0} and \eqref{temp_costLQ}, we arrive at \eqref{LQGCost}. This completes the proof.% Then, using Assumptions \ref{As:sys_param}-\ref{As_For_k}, summing over 0 to $T-1$, dividing by T, taking limsup, and substituting in \eqref{temp_costLQ0}, we get \eqref{LQGCost}.
\end{proof}
\begin{remark}
    We remark here that the boundedness of the cost in the special case of deterministic channels is similarly implied by the uniform bound on the AoI from Proposition \ref{Prop:Delta_boundedness}. This reiterates the advantage of the MATB-P over the uniformly randomized policy in \cite{aggarwal2022weighted}, where an assumption on $\|A\|_F$ was required to entail the boundedness of the cost.
\end{remark}

\subsection{$\epsilon$--Nash Equilibrium}
Now, that we have computed the optimal policy of the generic agent of type $\phi$, we will henceforth prove the existence of a unique MFE. To this end, we use the policy $\gamma$ from Section \ref{subsec:hard_cons_pol} to arrive at the closed-loop system (CLS) in \eqref{Decoder_state} under the policy \eqref{Optimal_Cntrl} as
\begin{align}
    Z_{k+1} &= (A_{cl}(\phi)Z_k -B(\phi)K_2(\phi)g_{k+1} + W_{k+1}) \mathbf{1}_{[\wp_{k+1}=1]} \nonumber \\
    & + (A_{cl}(\phi)Z_k -B(\phi)K_2(\phi)g_{k+1}) \mathbf{1}_{[\wp_{k+1}=0]},
\end{align}
which on taking expectation and using Proposition \ref{Prop:Optimal_Cntrl} yields
\begin{align}
    \mu^\phi_k :=& \mathbb{E}[X_k]:= A_{cl}(\phi)^kx_{\phi,0} + \sum_{j=0}^{k-1} A_{cl}(\phi)^{k-j-1}B(\phi)K_2(\phi) \nonumber \\
     &\hspace{1cm} \times \sum_{r=j+1}^\infty (A_{cl}(\phi)^{r-j-1})^\top Q(\phi) \mu_r.
\end{align}
Define the MF operator as
\begin{align}
    \{\operatorname{M}_{\operatorname{F}}(\mu)\}_k:= \sum_{\phi \in \Phi} \mu^\phi_k \mathbb{P}(\phi), k \geq 0.
\end{align}
Also, we invoke the following assumption on model parameters.

\begin{assumption}\label{As:Contraction}
    $\|A_{cl}(\phi)\| + \sum_{\phi \in \Phi}\|Q(\phi)\| \|B(\phi)K_2(\phi)\|(1\!-\!\|A_{cl}(\phi)\|)^{-2}\mathbb{P}(\phi) < 1$, $\forall \phi \in \Phi$.
\end{assumption}
We next prove the following lemma and state the main theorem showing the $\epsilon$--Nash property of the MFG solution.
\begin{lemma}\label{Combined_results}
    Suppose that Assumptions \ref{As:sys_param}-\ref{As:Contraction} hold. Then, the following are true:
    \begin{enumerate}
        \item (MFE Uniqueness): There exists a unique $\mu^*$ such that $\mu^*= \operatorname{M}_{\operatorname{F}}(\mu^*)$ with the property that $\exists K_3^* \in \mathbb{K}:= \{K_3 \in \mathbb{R}^{n \times n} \mid \|K_3\| < 1, \mu^*_{k+1} = K_3 \mu^*_k\}$, and $\mu^*_0 = \sum_{\phi \in \Phi}x_{\phi,0}\mathbb{P}(\phi)$.
        
        \item (CLS stability): The CLS \eqref{system} under \eqref{Optimal_Cntrl} is mean-squared stable, i.e., $\sup_{N \geq 1}\max_{1 \leq j \leq N}$ $\overline{\lim}_{T \rightarrow \infty} $ $\frac{1}{T} \sum_{k=0}^{T-1}\mathbb{E}[\|\mu^*_k\|^2] <$ $ \infty$.

        \item (MFE Approximation): We have that $\mu^{N,*}_k \xrightarrow[\text{$N \rightarrow \infty$}]{\text{m.s.}} \mu^*_k$ at a rate of $\mathcal{O}(1/ \min_{\phi}N_{\phi})$, where $N_{\phi}$ denotes the cardinality of agents of type $\phi$, and $\mu^{N,*}_k$ is the empirical state average under \eqref{Optimal_Cntrl}.
    \end{enumerate}
\end{lemma}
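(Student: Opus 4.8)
The plan is to handle the three items separately, each reducing to a contraction mapping in an appropriate space or to a bias--variance estimate, using Proposition~\ref{Prop:Optimal_Cntrl} and Theorem~\ref{Prop:Delta_High_deviation_bounded} as black boxes.

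\textbf{Item 1 (MFE uniqueness and structure).} First I would make $\operatorname{M}_{\operatorname{F}}$ explicit: substituting $g_k=-\sum_{r\ge k}(A_{cl}(\phi)^{r-k})^\top Q(\phi)\mu_r$ from Proposition~\ref{Prop:Optimal_Cntrl} into the formula for $\mu^\phi_k$ shows that $\{\operatorname{M}_{\operatorname{F}}(\mu)\}_k$ is an affine functional of the sequence $(\mu_r)_{r\ge0}$ whose coefficients are sums of products of powers of $A_{cl}(\phi)$ with $B(\phi)K_2(\phi)$ and $Q(\phi)$. Assumption~\ref{As:Contraction} in particular forces $\|A_{cl}(\phi)\|<1$ for each $\phi$ (its first summand), so $\sum_{\ell\ge0}\|A_{cl}(\phi)\|^\ell=(1-\|A_{cl}(\phi)\|)^{-1}$ and every geometric sum converges. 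Subtracting two trajectories and summing the two nested geometric series ($\sum_{r>j}$ inside $\sum_{j<k}$) gives
\[
\|\operatorname{M}_{\operatorname{F}}(\mu)-\operatorname{M}_{\operatorname{F}}(\mu')\|_\infty\le\Big(\sum_{\phi\in\Phi}\mathbb{P}(\phi)\,\|Q(\phi)\|\,\|B(\phi)K_2(\phi)\|\,(1-\|A_{cl}(\phi)\|)^{-2}\Big)\|\mu-\mu'\|_\infty,
\]
and the bracketed constant is $<1$ by Assumption~\ref{As:Contraction}; since $(\mathcal{M},\|\cdot\|_\infty)$ is complete, the Banach fixed point theorem yields a unique $\mu^*$. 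The initial value is forced, because the inner sum in $\mu^\phi_0$ is empty, so $\{\operatorname{M}_{\operatorname{F}}(\mu)\}_0=\sum_\phi\mathbb{P}(\phi)x_{\phi,0}$ for every $\mu$, whence $\mu^*_0=\sum_\phi\mathbb{P}(\phi)x_{\phi,0}$. For the linear-recursion property, I would run a companion contraction on the matrix set $\mathbb{K}$: feed a trajectory of the form $k\mapsto K_3^k\mu^*_0$ (with $\|K_3\|<1$) back through the $g$-recursion and $\operatorname{M}_{\operatorname{F}}$, extract the matrix $\mathcal{G}(K_3)$ governing successive outputs, and verify --- using the full quantity in Assumption~\ref{As:Contraction}, where now the $\|A_{cl}(\phi)\|$ summand is what keeps $\mathcal{G}$ valued in $\mathbb{K}$ --- that $\mathcal{G}$ is a self-map and a contraction on $\mathbb{K}$; its unique fixed point $K_3^*$ generates a trajectory with the correct initial condition that is itself fixed by $\operatorname{M}_{\operatorname{F}}$, hence, by the uniqueness just proved, equal to $\mu^*$, so $\mu^*_{k+1}=K_3^*\mu^*_k$.

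\textbf{Item 2 (CLS mean-square stability).} Under the decentralized policy \eqref{Optimal_Cntrl} the $N$ closed loops are decoupled (each agent tracks the fixed trajectory $\mu^*$, not the empirical average), so it suffices to bound $\overline{\lim}_{T}\tfrac1T\sum_{k=0}^{T-1}\mathbb{E}\|X^j_k\|^2$ uniformly in $j$ and $N$. Writing $X^j_k=Z^j_k+e^j_k$, the error moment $\mathbb{E}\|e^j_k\|^2$ is bounded exactly as in \eqref{temp_costLQ}: Assumption~\ref{As:sys_param} ($\|A\|_F^2p<1$) makes the AoI-induced tail geometric, giving $\sup_k\mathbb{E}\|e^j_k\|^2<\infty$ with an $N$-free constant (alternatively, the tail of $\tau^j_k$ from Theorem~\ref{Prop:Delta_High_deviation_bounded} may be used). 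The decoder state satisfies $Z^j_{k+1}=A_{cl}(\phi_j)Z^j_k+(\text{forcing})$ with forcing built from $W^j$, the bounded sequence $g^*_{k+1}$ (bounded because $\mu^*\in\mathcal{M}$ and $\|A_{cl}\|<1$), and the reset term, all with second moments bounded uniformly in $N$; since $\|A_{cl}(\phi_j)\|<1$, a routine Lyapunov/geometric estimate gives $\sup_k\mathbb{E}\|Z^j_k\|^2<\infty$ with an $N$-free constant. Adding the two contributions proves the claim, and the same bound transfers to the empirical average by Jensen's inequality.

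\textbf{Item 3 (MFE approximation).} I would use the decomposition $\mu^{N,*}_k-\mu^*_k=\sum_{\phi}\big(\tfrac{N_\phi}{N}-\mathbb{P}(\phi)\big)\mu^\phi_k+\tfrac1N\sum_{j=1}^N\big(X^j_k-\mathbb{E}[X^j_k]\big)$, using $\mathbb{E}[X^j_k]=\mu^{\phi_j}_k$ under \eqref{Optimal_Cntrl} and $\sum_\phi\mathbb{P}(\phi)\mu^\phi_k=\mu^*_k$. The first term is deterministic and $\mathcal{O}(1/\min_\phi N_\phi)$ because $|\tfrac{N_\phi}{N}-\mathbb{P}(\phi)|=|\mathbb{P}^N(\phi)-\mathbb{P}(\phi)|=\mathcal{O}(1/N)$ and each $\mu^\phi_k$ is uniformly bounded; the second is an average of independent, zero-mean vectors (the per-agent noises and erasure/scheduling randomness are independent across $j$), so $\mathbb{E}\|\tfrac1N\sum_j(X^j_k-\mathbb{E}X^j_k)\|^2=\tfrac1{N^2}\sum_j\mathbb{E}\|X^j_k-\mathbb{E}X^j_k\|^2=\mathcal{O}(1/N)$ by the uniform variance bound from Item~2. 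Combining and using $N\ge\min_\phi N_\phi$ gives $\mathbb{E}\|\mu^{N,*}_k-\mu^*_k\|^2=\mathcal{O}(1/\min_\phi N_\phi)$ uniformly in $k$, i.e.\ the claimed m.s.\ convergence and rate.

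\textbf{Main obstacle.} Items 2 and 3 are essentially bookkeeping once the error-moment bound of Proposition~\ref{Prop:Optimal_Cntrl} and independence across agents are in hand. The delicate part is Item~1, and specifically the structural claim $\mu^*_{k+1}=K_3^*\mu^*_k$: one has to set up the auxiliary contraction on $\mathbb{K}$ so that its fixed point is genuinely consistent with the trajectory fixed point of $\operatorname{M}_{\operatorname{F}}$, which forces the single constant in Assumption~\ref{As:Contraction} to dominate both contraction moduli at once and requires careful tracking of the two nested geometric series across the agent types.
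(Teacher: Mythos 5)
Your proposal is correct and matches the paper's treatment: the paper proves Item~2 by unrolling the closed-loop dynamics $X^i_{k+1}=A_{cl}(\phi_i)X^i_k+B(\phi_i)K_1(\phi_i)e^i_k-B(\phi_i)K_2(\phi_i)g^i_{k+1}+W^i_k$ and bounding the resulting four convolution terms using the Schur stability of $A_{cl}$, the error-moment bound $\beta(\phi_i)$ from \eqref{temp_costLQ} under Assumption~\ref{As:sys_param}, and the boundedness of $g$ --- the same ingredients as your $X=Z+e$ split --- while Items~1 and~3 are handled exactly by the contraction-mapping and bias--variance arguments you reconstruct (the paper simply cites its predecessor for these). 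The only cosmetic difference is your decomposition of the state through the decoder variable rather than the direct unrolling, which changes nothing substantive.
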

\begin{proof}
    The proof of parts 1) and 3) follow in a similar manner as Theorem 3 and Proposition 6 in \cite{aggarwal2022weighted}. For part 2), consider the following, with the superscript $*$ dropped for ease of notation. Substituting \eqref{Optimal_Cntrl} in \eqref{generic_dyn}, we arrive at the closed-loop system as
\begin{align}\label{clsys}
    X^i_{k+1} =& A_{cl}(\phi_i) X^i_k + B(\phi_i)K_1(\phi_i) e^i_k \nonumber \\ &- B(\phi_i)K_2(\phi_i)g^i_{k+1} + W^i_k.
\end{align}
% Equivalently, \eqref{clsys} can be written as 
% \begin{align}\label{CL_system}
%     X^i[k] =& H(\theta_i)^k X^i[0] + \sum_{p=0}^{k-1}H(\theta_i)^{k-p-1}B(\theta_i)\Pi(\theta_i) e^i[p] \nonumber \\ &- \sum_{p=0}^{k-1}H(\theta_i)^{k-p-1}B(\theta_i)L(\theta_i) r^i[p+1] \nonumber \\ & + \sum_{p=0}^{k-1}H(\theta_i)^{k-p-1}W^i[p].
% \end{align}
Then, from \eqref{clsys} we have that
\begin{align}\label{Expected_CL_system}
     &\mathbb{E}\left[\|X^i_k\|^2 \right] \leq 4\mathbb{E} \left[ \left\lVert A_{cl}(\phi_i)^k X^i_0 \right\rVert^2 \right]  \nonumber \\
    & +4\mathbb{E} \left[ \left\lVert \sum_{r=0}^{k-1}A_{cl}(\phi_i)^{k-r-1}B(\phi_i)K_1(\phi_i) e^i_r \right\rVert^2 \right] \nonumber \\
    & +4\mathbb{E} \left[ \left\lVert \sum_{r=0}^{k-1}A_{cl}(\phi_i)^{k-r-1}B(\phi_i)K_2(\phi_i) g^i_{r+1} \right\rVert^2 \right]  \nonumber \\
     &+4\mathbb{E} \left[ \left\lVert \sum_{r=0}^{k-1}A_{cl}(\phi_i)^{k-r-1}W^i_r \right\rVert^2 \right],
\end{align}
where we used the fact that $\|\sum_{i=1}^k x_i\|^2 \leq k \sum_{i=1}^k\|x_i\|^2$.
We note that since $A_{cl}(\phi)$ are Hurwitz (as a result of proposition \ref{Prop:Optimal_Cntrl}), using \cite[Theorem 3.9]{costa2006discrete}, we can bound the first term in \eqref{Expected_CL_system} by $\iota(\phi_i)tr(\Sigma_x + x_{\phi_i,0}x_{\phi_i,0}^\top)/(1-\varsigma(\phi_i))$, and the fourth term in \eqref{Expected_CL_system} by $\iota(\phi_i)\sup_{1 \leq j \leq p} C_{W}(\phi_i)/(1-\varsigma(\phi_i))$ for constants $\iota(\phi_i) \geq 1$ and $0<\varsigma(\phi_i) <1$. Similarly, using the fact that $\|g^i\|_\infty < \infty$ (from Proposition \ref{Prop:Optimal_Cntrl}), the third term in \eqref{Expected_CL_system} can be bounded by $\frac{2\iota(\phi_i)\|B(\phi_i)K_2(\phi_i)\|^2\|g^i\|_\infty^2}{(1-\sqrt{\varsigma(\phi_i)}) (1-\varsigma(\phi_i))}$. Finally, using similar arguments as for the third term, we can show that the second term (call it $T_2$) can  be bounded as
\begin{align*}
    & T_2 \leq \sum_{r=0}^{k-1} \iota(\phi_i)\varsigma(\phi_i)^{k-r-1}\|B(\phi_i)K_1(\phi_i)\|^2 \mathbb{E}\left[\|e_r\|^2 \right] \nonumber \\ 
    & + \!\!\! \sum_{\substack{r,s=0,\\ r \ne s}}^{k-1} \!\!\!\iota(\phi_i)\varsigma(\phi_i)^{k\!-\frac{r}{2}\!-\!1\!-\frac{s}{2}}\|B(\phi_i)K_1(\phi_i)\|^2\!\! \sqrt{\!\mathbb{E}\left[\|e_r\|^2 \right]\!\mathbb{E}\left[\|e_s\|^2 \right]} \nonumber \\
    & \leq \beta(\phi_i)\iota(\phi_i)\|B(\phi_i)K_1(\phi_i)\|^2\!\left[\!\frac{1}{1-\varsigma(\phi_i)}\!+\! \frac{ 1}{(1\!-\!\sqrt{\varsigma(\phi_i)})^2}\right]\!\!,
\end{align*}
where $\beta(\phi_i) \!=\! \!\sum_{m=1}^{\Bar{\kappa}}\!\sum_{r=1}^{m}tr(A(\phi_i)^{{r-1}^\top}\!\!A(\phi_i)^{r-1}C_W(\phi_i))\!+\! \frac{\|C_W(\phi_i)\|_F}{\|A(\phi_i)\|_F^2\!-\!1}\! \times\! [\frac{\|A(\phi_i)\|^{2\Bar{\kappa}^{\phi_i} + 2}_Fp}{1-\|A(\phi_i)\|^2_Fp} \!-\! \frac{p}{1-p}]$. Finally, summing up all the bounds and noting that $\Phi$ is a finite set, we have the desired result. The proof is thus complete.
\end{proof}

We next state the following definition below.

\begin{definition}[Approximate Nash equilibrium]\label{epsNashDefn}
Given $\epsilon>0$, the set of control policies $\{\xi^j\}_{j \in [N]}$ constitutes an $\epsilon$--Nash equilibrium for the cost functions $\{J_j\}_{i \in [N]}$, if 
\begin{align}\label{eps_delta_Nash}
J_i(\xi^{*,i}, \xi^{*,-i}) \leq \inf_{\pi_c^i \in \Pi_i} J_i(\pi_c^{i}, \mu^{*,-i}) + \epsilon,~~\forall i \in [N].
\end{align}
\end{definition}

Then, we present the main result of this section stating that the %control laws extracted from the MFG formulation 
MFE control laws constitute an $\epsilon$-Nash equilibrium for the finite-population case.

\begin{theorem}\label{Th:eps_Nash}
    Suppose that Assumptions \ref{As:sys_param}-\ref{As:Contraction} hold. Then the sequence of decentralized control policies $\{\xi^j\}_{j \in [N]}$, constitutes an $\epsilon$--Nash equilibrium for the $N$--agent capacity-constrained LQ-mean field game. In particular, we have that
\begin{align}\label{epsNash}
\!\!J_i(\xi^{*,i}\!,\! \xi^{*,-i})\! \leq \!\inf_{\pi_c^i \in \Pi_i}\!\! J_i(\pi_c^{i}, \mu^{*,-i}) \!+ \!\Os\!\! \left(\! \frac{1}{ \sqrt{\min_{\phi \in \Phi} \!N_\phi\!}}\!\right)\!.\!\!
\end{align}
\end{theorem}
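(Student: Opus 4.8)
The plan is to follow the standard mean-field game $\epsilon$-Nash argument (as in \cite{huang2007large}), adapted to the present setting with the scheduling policy $\gamma$ from Section \ref{subsec:hard_cons_pol}. The key point is to compare, for a fixed agent $i$, the cost $J_i(\xi^{*,i},\xi^{*,-i})$ incurred in the true $N$-agent game (where the coupling term is the empirical average $\mu^N_k$) against the cost $J_i(\xi^{*,i},\mu^{*,-i})$ incurred against the true mean-field trajectory $\mu^*$, and then to argue that no unilateral deviation by agent $i$ can do better than $\inf_{\pi_c^i}J_i(\pi_c^i,\mu^{*,-i})$ by more than the stated $\Os(1/\sqrt{\min_\phi N_\phi})$. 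The main ingredients are already in place: Proposition \ref{Prop:Optimal_Cntrl} gives the optimal tracking control and a finite cost bound, and Lemma \ref{Combined_results} gives (i) existence/uniqueness of the MFE $\mu^*$ with contraction constant $K_3^*$, (ii) mean-squared stability of the closed loop, and (iii) the approximation rate $\|\mu^{N,*}_k-\mu^*_k\| = \Os(1/\min_\phi N_\phi)$ in mean square.

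First I would establish a Lipschitz-type bound on the cost functional $J_i$ with respect to perturbations in the tracking signal: using \eqref{Finite_cost}, if $\mu$ and $\mu'$ are two bounded trajectories then $|J_i(\pi_c^i,\mu) - J_i(\pi_c^i,\mu')|$ is controlled by $\overline{\lim}_{T\to\infty}\frac1T\sum_{k=0}^{T-1}(\|\mu_k-\mu'_k\| \cdot \text{(bounded moment terms)})$, where the moment terms are finite by part 2) of Lemma \ref{Combined_results} and the cost bound \eqref{LQGCost}. Applying this with $\mu = \mu^{N,*}$ and $\mu' = \mu^*$ and invoking part 3) of Lemma \ref{Combined_results} gives $|J_i(\xi^{*,i},\xi^{*,-i}) - J_i(\xi^{*,i},\mu^{*,-i})| = \Os(1/\sqrt{\min_\phi N_\phi})$ — the square root appearing because the cost is quadratic, so a Cauchy–Schwarz step converts the $\Os(1/\min_\phi N_\phi)$ mean-square rate into an $\Os(1/\sqrt{\min_\phi N_\phi})$ rate on the cost difference. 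Second, I would handle the deviating agent: if agent $i$ switches to an arbitrary $\pi_c^i \in \Pi_i$, the empirical average $\mu^N_k$ changes only through the single term $X^i_k/N$, so $\|\mu^{N}_k - \mu^{*}_k\|$ under the deviation is still $\Os(1/\sqrt{\min_\phi N_\phi})$ provided the deviating agent keeps finite second moments (and deviations with unbounded cost are trivially non-improving). Combining, $J_i(\xi^{*,i},\xi^{*,-i}) \le J_i(\xi^{*,i},\mu^{*,-i}) + \Os(1/\sqrt{\min_\phi N_\phi})$ and, since $\xi^{*,i}$ is optimal against $\mu^*$, $\le \inf_{\pi_c^i}J_i(\pi_c^i,\mu^{*,-i}) + \Os(1/\sqrt{\min_\phi N_\phi})$ for the deviation side as well; chaining the inequalities yields \eqref{epsNash}.

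The step I expect to be the main obstacle is the uniformity of the moment/Lipschitz bounds: the constant multiplying $1/\sqrt{\min_\phi N_\phi}$ must be independent of $N$ and of the choice of deviating agent $i$, which requires that the per-agent estimation-error second moments $\mathbb{E}[\|e^i_k\|^2]$ and state second moments $\mathbb{E}[\|X^i_k\|^2]$ be bounded \emph{uniformly in $N$} under the scheduling policy $\gamma$ (MATB-P). This is precisely where Theorem \ref{Prop:Delta_High_deviation_bounded} (the high-probability $\Os(\log(1/\delta))$ tail bound on the AoI, uniform in $N$) and the summability ensured by Assumption \ref{As:sys_param} enter: the tail bound lets one control $\mathbb{E}[\|e^i_k\|^2] = \sum_m w^i(m)\mathbb{P}(\tau^i_k = m)$ uniformly, since the WAoI weights $w^i(m)$ grow at most like $\|A_i\|_F^{2m}$ while the AoI tail decays like $p^m$ with $\|A_i\|_F^2 p < 1$. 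Once this uniform moment control is in hand, the rest is the routine MFG bookkeeping sketched above; care is also needed because the costs here are long-run averages (the $\overline{\lim}$), so all the comparisons must be done at the level of Cesàro limits rather than finite-horizon sums, but the stability from Lemma \ref{Combined_results} part 2) makes the interchange of limits legitimate.
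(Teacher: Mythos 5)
Your proposal is correct and follows essentially the same route as the paper, which omits the detailed argument and simply invokes Lemma \ref{Combined_results} together with the standard mean-field $\epsilon$--Nash bookkeeping of \cite{aggarwal2022linear,moon2014disc\-rete}: the decomposition into (i) a Lipschitz/Cauchy--Schwarz comparison of $J_i$ under $\mu^{N,*}$ versus $\mu^*$ (which is exactly where the square root on the $\Os(1/\min_\phi N_\phi)$ mean-square rate from part 3) of the lemma arises), (ii) the single-agent-deviation argument, and (iii) the uniform-in-$N$ moment control supplied by the scheduling policy under Assumption \ref{As:sys_param}, is precisely the intended proof. Your identification of the uniform second-moment bounds on $e^i_k$ as the delicate point is apt, though note these are already available directly from the threshold structure of the scheduler via the bound \eqref{temp_costLQ}, without needing the tail bound of Theorem \ref{Prop:Delta_High_deviation_bounded}.
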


\begin{proof}
    The proof follows from Lemma \ref{Combined_results} using techniques similar to those in \cite{aggarwal2022linear,moon2014discrete}, and hence is omitted.
\end{proof}

\section{An Illustrative Example}
\label{sec:example}
In this section, we validate the theoretical results using a numerical example. We first demonstrate the asymptotic optimality of MATB-P. For this purpose, we consider values for $N$ from 5 till 100, a time horizon of 5000 seconds, a low capacity $\mathcal{C} = 0.25N$, and an erasure probability $p = 0.2$. We plot the average weighted AoI of the system as a function of $N$ in Fig. \ref{fig:scheduling_policy}, for both the relaxed policy and MATB-P. We can see that the difference in the average cost under the above decays to 0, which shows the asymptotic optimality of the MATB policy.
%corroborates the results of Theorem \ref{Th:erasure_system}.

\begin{figure}[!h]
     \centering
    %  \begin{subfigure}[b]{0.21\textwidth}
    %      \centering
         \includegraphics[width=0.44\textwidth]{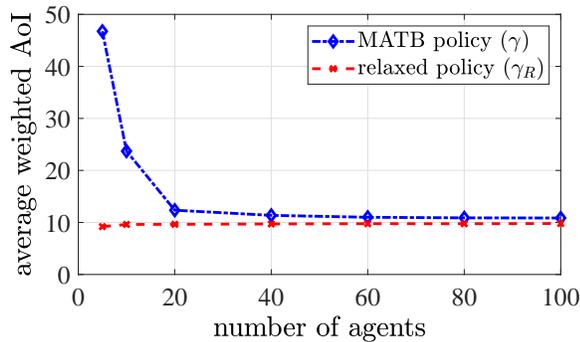}
        %  \caption{Hard-bandwidth policy approximates relaxed policy as $N \rightarrow \infty$.}
        %  \label{fig:schedul_short}
    %  \end{subfigure}
    %  \hfill
    %  \begin{subfigure}[b]{0.23\textwidth}
        %  \centering
    %      \includegraphics[width=\textwidth]{figs/Avg_cost_Vs_Bandwidth_ratio.eps}
    %      \caption{Average cost per agent for hard-bandwidth policy as $\alpha$ increases.}
    %      \label{fig:consensus_short}
    %  \end{subfigure}
    \vspace{-0.3cm}
        \caption{Plot shows the performance of the relaxed policy ($\gamma_R$) and the MATB policy, converging to each other.}
        \label{fig:scheduling_policy}
        \vspace{-0.5cm}
\end{figure}

Next, we simulate the behavior of a $900$--agent system, with 3 types of (scalar) agents, namely, with $A=0.5,1.0,1.15$, under the MATB scheduling protocol and the MFE policy $\xi^*$. We take $B=0.1269, C_W = 5$, $Q=R=2$, and a horizon of 500 seconds. In Fig. \ref{fig:sched_cons_short}, in the left plot, we show the variation of the average cost per agent as a function of the available capacity for a fixed erasure probability $p=0.2$. Next, in the right, we show the variation of the average cost per agent as a function of the channel erasure for a fixed capacity ratio $\alpha = 0.45$. The figures show a box plot depicting the median (red line) and spread (box) of the average cost per agent over 100 runs for each value of $\alpha$, and $p$, respectively. We can easily see that the average cost varies inversely with the available downlink capacity and in direct proportion to the erasure probability, aligned with intuition. 

\begin{figure}[!h]
     \centering
    %  \begin{subfigure}[b]{0.21\textwidth}
    %      \centering
         \vspace{-0.35cm}
         \includegraphics[width=0.495\textwidth]{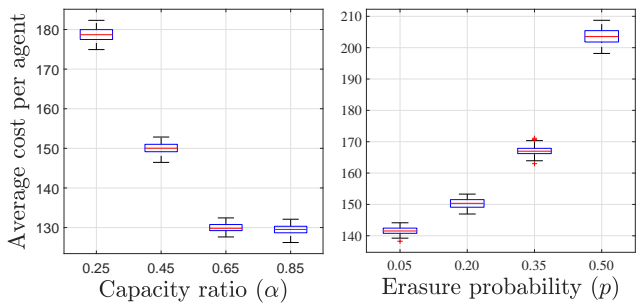}
        %  \caption{Hard-bandwidth policy approximates relaxed policy as $N \rightarrow \infty$.}
        %  \label{fig:schedul_short}
    %  \end{subfigure}
    %  \hfill
    %  \begin{subfigure}[b]{0.23\textwidth}
        %  \centering
    %      \includegraphics[width=\textwidth]{figs/Avg_cost_Vs_Bandwidth_ratio.eps}
    %      \caption{Average cost per agent for hard-bandwidth policy as $\alpha$ increases.}
    %      \label{fig:consensus_short}
    %  \end{subfigure}
    \vspace{-0.5cm}
        \caption{Plots show the variation of aggregate cost per agent with (a) capacity ratio $\alpha$, and (b) erasure probability $p$.}
        \label{fig:sched_cons_short}
        \vspace{-0.5cm}
\end{figure}

\section{Conclusion}\label{sec:conclusion}
In this paper, we have formulated a large population game problem involving information transmission over unreliable networks, thereby extending the setting of \cite{aggarwal2022weighted} and improving the guarantees in the special case of \cite{aggarwal2022weighted}. The network is regulated by a BS, for which we have constructed an asymptotically optimal scheduling policy. We have provided a tail analysis of the AoI under the same, first, for the case when the channel is free of any erasure and then for the case with erasure. Next, by using this policy, we have solved the consensus problem between the non-cooperative agents using the MFG framework by proving the existence of a unique equilibrium and consequently showing its $\epsilon$--Nash property to the finite-agent game problem. Finally, we have simulated a numerical example, which corroborates the theoretical developments.

\bibliographystyle{IEEEtran}
\bibliography{references}

% Generated by IEEEtran.bst, version: 1.14 (2015/08/26)
\begin{thebibliography}{10}
\providecommand{\url}[1]{#1}
\csname url@samestyle\endcsname
\providecommand{\newblock}{\relax}
\providecommand{\bibinfo}[2]{#2}
\providecommand{\BIBentrySTDinterwordspacing}{\spaceskip=0pt\relax}
\providecommand{\BIBentryALTinterwordstretchfactor}{4}
\providecommand{\BIBentryALTinterwordspacing}{\spaceskip=\fontdimen2\font plus
\BIBentryALTinterwordstretchfactor\fontdimen3\font minus
  \fontdimen4\font\relax}
\providecommand{\BIBforeignlanguage}[2]{{%
\expandafter\ifx\csname l@#1\endcsname\relax
\typeout{** WARNING: IEEEtran.bst: No hyphenation pattern has been}%
\typeout{** loaded for the language `#1'. Using the pattern for}%
\typeout{** the default language instead.}%
\else
\language=\csname l@#1\endcsname
\fi
#2}}
\providecommand{\BIBdecl}{\relax}
\BIBdecl

\bibitem{nanda2019internet}
A.~Nanda, D.~Puthal, J.~J. Rodrigues, and S.~A. Kozlov, ``Internet of
  autonomous vehicles communications security: overview, issues, and
  directions,'' \emph{IEEE Wireless Comm.}, vol.~26, no.~4, pp. 60--65, 2019.

\bibitem{rana2017distributed}
M.~Rana, L.~Li, and S.~W. Su, ``Distributed state estimation over unreliable
  communication networks with an application to smart grids,'' \emph{IEEE
  Transactions on Green Communications and Networking}, vol.~1, no.~1, pp.
  89--96, 2017.

\bibitem{osseiran20165g}
A.~Osseiran, J.~F. Monserrat, and P.~Marsch, \emph{5G mobile and wireless
  communications technology}.\hskip 1em plus 0.5em minus 0.4em\relax Cambridge
  University Press, 2016.

\bibitem{aggarwal2022weighted}
S.~Aggarwal, M.~A. uz~Zaman, M.~Bastopcu, and T.~Ba{\c{s}}ar, ``Weighted age of
  information based scheduling for large population games on networks,''
  \emph{Submitted to IEEE Jrnl. on Selec. Areas in Info. Theory. Available on
  arXiv:2209.12888.}, September 2022.

\bibitem{imer2006optimal}
O.~C. Imer, S.~Y{\"u}ksel, and T.~Ba{\c{s}}ar, ``Optimal control of {LTI}
  systems over unreliable communication links,'' \emph{Automatica}, vol.~42,
  no.~9, pp. 1429--1439, 2006.

\bibitem{schenato2007foundations}
L.~Schenato, B.~Sinopoli, M.~Franceschetti, K.~Poolla, and S.~S. Sastry,
  ``Foundations of control and estimation over lossy networks,''
  \emph{Proceedings of the IEEE}, vol.~95, no.~1, pp. 163--187, 2007.

\bibitem{moon2014discrete}
J.~Moon and T.~Ba{\c{s}}ar, ``Discrete-time {LQG} mean field games with
  unreliable communication,'' in \emph{IEEE CDC}, Dec. 2014, pp. 2697--2702.

\bibitem{Ayan2020}
O.~Ayan, M.~Vilgelm, and W.~Kellerer, ``Optimal scheduling for discounted age
  penalty minimization in multi-loop networked control,'' in \emph{IEEE CCNC},
  January 2020, pp. 1--7.

\bibitem{maatouk2020age1}
A.~Maatouk, S.~Kriouile, M.~Assaad, and A.~Ephremides, ``The age of incorrect
  information: A new performance metric for status updates,'' \emph{IEEE/ACM
  Trans. on Networking}, vol.~28, no.~5, pp. 2215--2228, 2020.

\bibitem{chen2021optimizing}
Y.~Chen, H.~Tang, J.~Wang, and J.~Song, ``Optimizing age penalty in
  time-varying networks with {M}arkovian and error-prone channel state,''
  \emph{Entropy}, vol.~23, no.~1, p.~91, January 2021.

\bibitem{sombabu2022whittle}
B.~Sombabu, B.~Dedhia, and S.~Moharir, ``Whittle index based age-of-information
  aware scheduling for {M}arkovian channels,'' \emph{Computer Networks and
  Comm.}, vol.~1, no.~1, pp. 59--84, December 2022.

\bibitem{Wu2022}
S.~Wu, X.~Ren, Q.-S. Jia, K.~H. Johansson, and L.~Shi, ``Towards efficient
  dynamic uplink scheduling over multiple unknown channels,''
  \emph{arXiv:2212.06633}, December 2022.

\bibitem{Kadota18a}
I.~Kadota, A.~Sinha, E.~Uysal-Biyikoglu, R.~Singh, and E.~Modiano, ``Scheduling
  policies for minimizing age of information in broadcast wireless networks,''
  \emph{IEEE/ACM Transactions on Networking}, vol.~26, no.~6, pp. 2637--2650,
  December 2018.

\bibitem{Yates20a}
R.~D. Yates, Y.~Sun, D.~R. Brown, S.~K. Kaul, E.~Modiano, and S.~Ulukus, ``Age
  of information: An introduction and survey,'' \emph{IEEE Jrnl. on Selec.
  Areas in Comm.}, vol.~39, no.~5, pp. 1183--1210, 2021.

\bibitem{huang2007large}
M.~Huang, P.~E. Caines, and R.~P. Malham{\'e}, ``Large-population cost-coupled
  {LQG} problems with nonuniform agents: individual-mass behavior and
  decentralized $\varepsilon$-{N}ash equilibria,'' \emph{IEEE Transactions on
  Automatic Control}, vol.~52, no.~9, pp. 1560--1571, 2007.

\bibitem{lasry2007mean}
J.-M. Lasry and P.-L. Lions, ``Mean field games,'' \emph{Japanese Journal of
  Mathematics}, vol.~2, no.~1, pp. 229--260, 2007.

\bibitem{zaman2022reinforcement}
M.~A.~u. Zaman, E.~Miehling, and T.~Ba{\c{s}}ar, ``Reinforcement learning for
  non-stationary discrete-time linear--quadratic mean-field games in multiple
  populations,'' \emph{Dynamic Games and Apps.}, vol.~13, pp. 118--164, 2023.

\bibitem{aggarwal2022linear}
S.~Aggarwal, M.~A. uz~Zaman, and T.~Ba{\c{s}}ar, ``Linear quadratic mean-field
  games with communication constraints,'' in \emph{IEEE ACC}, June 2022, pp.
  1323--1329.

\bibitem{bensoussan2016linear}
A.~Bensoussan, K.~Sung, S.~C.~P. Yam, and S.-P. Yung, ``Linear-quadratic mean
  field games,'' \emph{Journal of Optimization Theory and Applications}, vol.
  169, no.~2, pp. 496--529, 2016.

\bibitem{zhang2021age}
H.~Zhang, Y.~Kang, L.~Song, Z.~Han, and H.~V. Poor, ``Age of information
  minimization for grant-free non-orthogonal massive access using mean-field
  games,'' \emph{IEEE Transactions on Communications}, vol.~69, no.~11, pp.
  7806--7820, 2021.

\bibitem{bennis2018ultrareliable}
M.~Bennis, M.~Debbah, and H.~V. Poor, ``Ultrareliable and low-latency wireless
  communication: Tail, risk, and scale,'' \emph{Proceedings of the IEEE}, vol.
  106, no.~10, pp. 1834--1853, 2018.

\bibitem{aggarwal2023ACC}
S.~Aggarwal, M.~A. uz~Zaman, M.~Bastopcu, and T.~Ba{\c{s}}ar, ``Large
  population games with timely scheduling over constrained networks,'' in
  \emph{IEEE ACC (to appear)}, May-June 2023.

\bibitem{lewis2012optimal}
F.~L. Lewis, D.~Vrabie, and V.~L. Syrmos, \emph{Optimal Control}.\hskip 1em
  plus 0.5em minus 0.4em\relax John Wiley \& Sons, 2012.

\bibitem{altman1999constrained}
E.~Altman, \emph{Constrained Markov Decision Processes: Stochastic
  Modeling}.\hskip 1em plus 0.5em minus 0.4em\relax Routledge, 1999.

\bibitem{ayan2019age}
O.~Ayan, M.~Vilgelm, M.~Kl{\"u}gel, S.~Hirche, and W.~Kellerer,
  ``Age-of-information vs. value-of-information scheduling for cellular
  networked control systems,'' in \emph{ACM/IEEE ICCPS}, April 2019, pp.
  109--117.

\bibitem{hatami2022demand}
M.~Hatami, M.~Leinonen, Z.~Chen, N.~Pappas, and M.~Codreanu, ``On-demand {AoI}
  minimization in resource-constrained cache-enabled {IoT} networks with energy
  harvesting sensors,'' \emph{arXiv:2201.12277}, 2022.

\bibitem{costa2006discrete}
O.~L.~V. Costa, M.~D. Fragoso, and R.~P. Marques, \emph{Discrete-time Markov
  {J}ump {L}inear {S}ystems}.\hskip 1em plus 0.5em minus 0.4em\relax Springer
  Sci. \& Business Media, 2006.

\bibitem{berry_essen}
I.~Shevtsova, ``On the absolute constants in the berry-esseen type inequalities
  for identically distributed summands,'' \emph{arXiv:1111.6554}, November
  2011.

\end{thebibliography}

% \appendix
\begin{appendices}
\section{Computation of $\kappa$}\label{ap:Compute_kappa}
% \subsection{Computation of $\kappa$}
% \begin{proof}[Computation of $\kappa$]
Consider the finite state-space $\operatorname{S}' = \{0, 1, \cdots, \kappa\}$. Then, the Bellman equation for Problem \ref{decoupled_problem} %\eqref{Prob:MDP} %for the cost function $\operatorname{C}(\cdot, \cdot)$ 
can be written as
\begin{align}\label{Bellman_eqn}
    V(\tau) + \sigma^* &= \min\{\operatorname{C}(\tau,0) + V(\tau+1), \operatorname{C}(\tau,1) \nonumber \\
    & + pV(\tau+1) + (1-p)V(0)\}.
\end{align}
Then, by using \eqref{Bellman_eqn}, for $\tau \geq \kappa$, we have that
\begin{align*}
    &V(\tau) = c(\tau)+\lambda + pV(\tau+1) + (1-p)V(0) - \sigma^* \nonumber \\
    &\! = \!\sum_{r=0}^{n-1}(c(\tau+r) \!+\! \lambda + (1-p)V(0) \!-\!\sigma^*)p^r \!+\! p^nV(\tau+n).
\end{align*}
Taking the limit as $n \rightarrow \infty$, we get
\begin{align}\label{temp:feqn}
   \! V(\tau)\! =\!\frac{\lambda \!+ \!(1\!-\!p)V(0)\! -\! \sigma^*}{1\!-\!p} \!+ \!\underbrace{\lim_{n \rightarrow \infty} \sum_{r=0}^{n-1}c(\tau+r)p^r}_{f(\tau)},
\end{align}
where we define $c(x):=c(x,A,C_{W})  = \sum_{r=1}^{x}tr((A^{r-1})^\top $ $A^{r-1}C_W)x=\!\sum_{r=1}^{x}tr\!\left(\!\left(A^{r-1}\!xC_W^{0.5}\right)^\top\! \!\!A^{r-1}\!C_W^{0.5}\right)\!x\! = \!\sum_{r=1}^{x} \|A^{r-1}C_W^{0.5}\|_F^2 x$.

Next, in order to give a closed-form equation to compute $\kappa$, we consider a scalar system (with $n=1$ in \eqref{system}) for the computation of the function $f(\cdot)$, which can be given as
\begin{align}\label{kappa_computation}
f(x) \!\!= \!\!\begin{cases}
\!\frac{C_W}{1-A^2}\!\left[ x(\frac{1}{1-p}\!-\!\frac{A^{2x}}{1-A^2p}) \!+ \!\frac{p}{(1-p^2)} \!-\! \frac{A^{2x+2}p}{(1-A^2p)^2} \right]\!\!, & \!\!\!\!\! A \neq 1, \\
\!\frac{C_Wx^2}{1-p} + \frac{2C_Wxp}{(1-p)^2} + \frac{C_Wp(1+p)}{(1-p)^3}, & \!\!\!\!\! A=1.
\end{cases}
\end{align}
We note that the calculation of $f(\cdot)$ involves an infinite sum which is finite under %as a result of 
Assumption \ref{As:sys_param}. Further, we observe that $V(\kappa) \leq \lambda/(1-p) + V(0) \leq V(\kappa+1)$. Hence, there exists $\eta \in [0,1]$ such that $V(\kappa+ \eta) = \lambda/(1-p) + V(0)$. Combining this with \eqref{temp:feqn}, we get that $\sigma^* = (1-p)f(\kappa+\eta)$. Next, for $\tau < \kappa$, we have from \eqref{Bellman_eqn} that $V(\tau) + \sigma^* = V(\tau+1) + c(\tau)$. Combining this with \eqref{temp:feqn}, evaluated at $\tau= \kappa$, we arrive at 
\begin{align}\label{kapp_computation1}
    (1+\kappa(1-p))f(\kappa+\eta) = \frac{\lambda}{(1-p)} \!+\! f(\kappa) \!+\! \sum_{i=1}^{\kappa-1}c(i).
\end{align}
The above is an implicit equation in $\kappa$ and $\eta$, for a given tuple $(A,C_W,p,\lambda)$, and can be solved in conjunction with \eqref{kapp_computation1} to compute $\kappa$.

\section{Proof of Proposition \ref{Prop:Delta_boundedness}}\label{Proof:PropDeltaBounded}
%[Proof of Proposition \ref{Prop:Delta_boundedness}]
Let us start by defining the sets $\bar{S}_k$ and $S^{\gamma}_k$ for $k \geq 0$:
\begin{align*}
    \bar{S}_k & := \{ i \in [N] | \tau^i_k > \bar{\kappa} := \max (\max_{\phi \in \Phi} \overline{\kappa}^{\phi}(\bar{\lambda}^*), \lceil \alpha^{-1} \rceil )\}, \\
    S^{\gamma}_k &  := \{ i \in [N] | \zeta^{i}_k = 1, t_0 \leq k \leq t_0 +\lceil \alpha^{-1} \rceil, t_0 \geq 0\}.
\end{align*}
% where $\overline{\kappa}(\phi):= \overline{\kappa}(A(\phi),C_W(\phi),\overline{\lambda}^*)$.
The set $\bar{S}_k$ is the set of agents whose AoIs exceed $\bar{\kappa}$, and hence the subset of agents which are supposed to be scheduled for transmission. The set $S^{\gamma}_k$ is the subset of agents which are scheduled at time $k$ using MATB-P. Let us also define the quantity $\tau_{\max,k} $ as:
\begin{align*}
    \tau_{\max,k} := \left\{\begin{array}{cc} \max\{\tau^i_k | i\in \bar{S}_k \}, & \text{if } |\bar{S}_k| > 0, \\
    \bar{\kappa}, & \text{if } |\bar{S}_k| = 0. \end{array} \right.
\end{align*}
This quantity is the upper bound on $\tau^i_k$ by definition. Now, we first investigate some properties of the sets $\bar{S}_k$ and $S^{\gamma}_k$. By definition, it holds that $|\bar{S}_{0}| = 0$. Define $t_0$ as the first timestep when the cardinality of $\bar{S}_k$ exceeds $\mathcal{C}$, i.e.,
\begin{align*}
    t_0 := \min_{k \geq 0} \{ k \mid \lvert\bar{S}_k\rvert > \mathcal{C}\}
\end{align*}
If $t_0 = \infty$, then $\tau^i_k$ can be trivially bounded by $\bar{\kappa}$. Hence, we assume $t_0 < \infty$. Let $m \geq 0$ be the timesteps it takes for the cardinality of $\bar{S}_k$ to drop below $\mathcal{C}+1$. More formally, we have that
\begin{align*}
    |\bar{S}_k| > \mathcal{C}, t_0 \leq k \leq t_0+m \text{ and } |\bar{S}_{t_0+m+1}| \leq \mathcal{C}.
\end{align*}
First, we notice that for any $k$ such that $t_0 \leq k \leq t_0+m$,
\begin{align}
    S^{\gamma}_k \subset \bar{S}_k.  \label{eq:SgammasubsetSbar}
\end{align}
This is due to the fact that for $k\in[t_0,t_0+m]$, we have that $|\bar{S}_k| > \mathcal{C}$, and thus, the number of agents to be scheduled is larger than $\mathcal{C}$. As a result, the scheduling policy breaks the tie using MATB-P and since the agents with the highest AoIs reside in $\bar{S}_k$ for $t_0\leq k \leq t_0+m$, they will be scheduled. This also means that
\begin{align}
    |S^{\gamma^d}_k| = \mathcal{C}, \text{ and } t_0 \leq k \leq t_0 + m. \label{eq:SgammaRD}
\end{align}
Next, we notice that if an agent with index $i \in \bar{S}_k \cap S^{\gamma}_k$ for $ t_0 \leq k \leq t_0 + m$, then
\begin{align}
i & \notin \bar{S}_{k'}, k < k' \leq t_0 + \min( m, \lceil \alpha^{-1} \rceil). \label{eq:inotinsbar}
\end{align}
This is due to the fact that if $i \in S^{\gamma}_k$ for any $i \in [N]$ and $k \geq 0$, then $\tau^i(k') \leq \lceil \alpha^{-1}\rceil$ for $k < k' \leq k+  \lceil \alpha^{-1}\rceil$, and thus, $i \notin \bar{S}_{k'}$, for $k < k' \leq k + \lceil \alpha^{-1}\rceil$. Further,  \eqref{eq:inotinsbar} follows since $t_0 + \min(m,\lceil \alpha^{-1} \rceil) \leq k+\lceil \alpha^{-1} \rceil$. Combining \eqref{eq:SgammasubsetSbar} and \eqref{eq:inotinsbar}, we can deduce that if $i \in S^{\gamma}_k, t_0 \leq k \leq t_0 + \min(m,\lceil \alpha^{-1} \rceil)$, then $i \notin S^{\gamma}(k'), k < k' \leq t_0 + \min(m,\lceil \alpha^{-1} \rceil)$. This means that if an agent is scheduled in the interval $[t_0,t_0+\min(m,\lceil \alpha^{-1}\rceil)]$, then it will not be scheduled again in this interval. Hence, 
\begin{align}\label{eq:Sgammanointersect}
     \!\!\!S^{\gamma}_k \cap S^{\gamma}_{k'} \!=  \!\varnothing,  k,k'\! \in\! [t_0,t_0+\min(m,\lceil \alpha^{-1}\rceil)], k \neq k'.\!\!\!
\end{align}
Using this result, we will prove that the cardinality of the set $\bar{S}_k$ cannot be higher than $\mathcal{C}$ for more than $\lceil \alpha^{-1} \rceil$ timesteps. Let us assume by contradiction that $m > \lceil \alpha^{-1} \rceil$. Using \eqref{eq:SgammaRD} and \eqref{eq:Sgammanointersect} for $T \in [0,\lceil \alpha^{-1} \rceil]$, we get:
\begin{align*}
    \Bigg\lvert \bigcup_{k=t_0}^{t_0 + T} S^{\gamma}_k \Bigg\rvert = \sum_{k=t_0}^{t_0 + T} \lvert S^{\gamma}_k \rvert = (T+1) \mathcal{C}, 
\end{align*}
Fixing $T=\lceil \alpha^{-1} \rceil$, we obtain
\begin{align*}
    \Bigg\lvert \bigcup_{k=t_0}^{t_0 + \lceil k \rceil} S^{\gamma}(t) \Bigg\rvert = (\lceil \alpha^{-1} \rceil+1) \mathcal{C} \geq N + \mathcal{C} > N.
\end{align*}
Now, since the union takes care of duplications and the total number of agents in the game is $N$ for $T' \geq 0$,
\begin{align*}
    \Bigg\lvert \bigcup_{k=t_0}^{t_0 + T'} S^{\gamma}_k \Bigg\rvert \leq N,
\end{align*}
which leads to a contradiction. Thus, our assumption of $m > \lceil \alpha^{-1} \rceil$ was incorrect to begin with, and we have finally proved that $m \leq \lceil \alpha^{-1} \rceil$.

Next, we prove that $\tau_{\max,k} = \Os(\lceil \alpha^{-1} \rceil)$. For $k \notin [t_0,t_0+m]$, it is easy to see that $\tau_{\max,k+1} \leq \bar{\kappa}$. Moreover, for $k \in [t_0,t_0+m]$, $\tau_{\max,k+1} \leq \tau_{\max,k} + 1$. Now, since $m \leq \lceil \alpha^{-1} \rceil \leq \bar{\kappa}$, $\tau_{\max,k} \leq 2 \bar{\kappa}$ for any $k \geq 0$. Hence, the statement of the theorem follows, which completes the proof.

\section{Proof of Theorem \ref{Th:Asymptotic_optimality_deterministic}}\label{Ap:asymp_opt}

% \subsection{Proof of Theorem \ref{Th:Asymptotic_optimality_deterministic}}\label{appen_thm2}
%[Proof of Theorem \ref{Th:Asymptotic_optimality_deterministic}]
    Consider the following.
\begin{align}\label{derivation}
    & J^{BS}_{\hat{\gamma}} - J^{BS}_{\gamma_R}
    % = \limsup_{T \rightarrow \infty} \frac{1}{NT}\mathbb{E}\bigg[\sum_{k=0}^{T-1}\sum_{i=1}^N g(\Delta^i(t),A_i,K_{W^i})  + \omega(\tilde{\Delta}^i(t),A_i,K_{W^i})\bigg]\Bigg|_{\hat{\gamma}^d}  \nonumber \\
    % & \qquad \qquad \!\!-\limsup_{T \rightarrow \infty} \frac{1}{NT}\mathbb{E}\left[\sum_{k=0}^{T-1}\sum_{i=1}^N g(\Delta^i(t),A_i,K_{W^i})\right]\Bigg|_{\gamma^{d,*}_R} \nonumber \\
    \!=\! \overline{\lim\limits_{T \rightarrow \infty}} \frac{1}{NT}\mathbb{E}\!\!\left[\sum_{k=0}^{T-1}\sum_{i=1}^N
    c(\bar{\Delta},A_i,C_{W^i}) \right. \nonumber\\ & \left. \times \mathbf{1}_{\left\lbrace 1 \geq {\mathcal{C}}/{n^\lambda_k}\right\rbrace}\mathbf{1}_{\left\lbrace\tilde{\tau}^i_k \geq \kappa^i\right\rbrace}\right]  \nonumber \\
    & \leq \overline{\lim\limits_{T \rightarrow \infty}} \frac{\operatorname{U}}{NT}\mathbb{E}\left[\sum_{k=0}^{T-1}\sum_{i=1}^N
   \mathbf{1}_{\left\lbrace{n^\lambda_k}>{\mathcal{C}}\right\rbrace}\mathbf{1}_{\left\lbrace\tilde{\tau}^i_k \geq \kappa^i\right\rbrace} \right]  \nonumber \\
   & \leq \overline{\lim\limits_{T \rightarrow \infty}}\frac{\operatorname{U}}{T} \sum_{k=0}^{T-1}\mathbb{E}\left[
  \mathbf{1}_{\left\lbrace{n^\lambda_k}>{\mathcal{C}}\right\rbrace} \right]  \nonumber \\
   & = \overline{\lim\limits_{T \rightarrow \infty}}\frac{\operatorname{U}}{T} \sum_{k=0}^{T-1}\mathbb{P}(n^\lambda_k>\mathcal{C})  \nonumber \\
   & \leq \overline{\lim\limits_{T \rightarrow \infty}}\frac{\operatorname{U}}{T} \sum_{k=0}^{T-1}\mathbb{P}(e^{\theta n^\lambda_k}\geq e^{\theta\mathcal{C}}), ~\theta > 0 \nonumber \\
   & \leq \overline{\lim\limits_{T \rightarrow \infty}}\frac{\operatorname{U}}{T} \sum_{k=0}^{T-1}\inf_{\theta > 0}\frac{\mathbb{E}[e^{\theta n^\lambda_k}]}{e^{\theta\mathcal{C}}} \nonumber \\
    & = \overline{\lim\limits_{T \rightarrow \infty}}\frac{\operatorname{U}}{T} \sum_{k=0}^{T-1}\inf_{\theta \geq 0}\frac{(\mathbb{E}[e^{\theta a^i_k}])^N}{e^{\theta\mathcal{C}}} \leq \operatorname{U} \sum_{k=0}^{T-1} e^{-\operatorname{D}(\alpha||q)N},
    % & \leq \overline{\lim\limits_{T \rightarrow \infty}} \frac{\operatorname{U}}{NT}\mathbb{E}\left[\sum_{k=0}^{T-1}\left\lbrace {\Lambda^{\max}_k \!-\!\mathcal{C}} \right\rbrace^+ \right] \nonumber \\
    % & \leq \limsup_{T \rightarrow \infty} \frac{1}{NT}\mathbb{E}\left[\sum_{k=0}^{T-1}\sum_{i=1}^N \left\lbrace\frac{\Omega_k-R_d}{R_d} \right\rbrace^+\!\!\!\omega(\Delta^i(t),A_i,K_{W^i}) \right] \\
    % & \leq \limsup_{T \rightarrow \infty} \frac{1}{N^2T\alpha}\mathbb{E}\left[\sum_{k=0}^{T-1} |\Omega_k\!-\!R_d| \sum_{i=1}^N \omega(\Delta^i(t),A_i,K_{W^i})\! \right], \\
    % & \leq \frac{\operatorname{U}}{N\alpha}\overline{\lim\limits_{T \rightarrow \infty}} \frac{1}{T}\sum_{k=0}^{T-1}\mathbb{E}\left[ |\Omega_k-R_d|  \right] \nonumber \\
    % &  \leq \frac{\operatorname{U}}{N\alpha} \overline{\lim\limits_{T \rightarrow \infty}} \frac{1}{T}\sum_{k=0}^{T-1} \mathbb{E}\left[|\Lambda^{\max}_k - \mathbb{E}\left[ \Lambda \right]| \right], ~\text{w.p. $\geq 1-\delta$},
\end{align}
where $J^{BS}_{\hat{\gamma}}$ and $J^{BS}_{\gamma_R}$ are the costs under policies $\hat{\gamma}$ and $\gamma_R$, respectively, and $\operatorname{U}:= \max_{i \in [N]}c(\bar{\Delta},A_i,C_{W^i})$. The first equality follows since the sample paths of the AoI under $\hat{\gamma}$ coincide with those under the policy $\gamma_R$ by definition. The first inequality follows as a result of Proposition \ref{Prop:Delta_boundedness}. The third inequality follows by the monotonic nature of the exponential function and second-to-last inequality follows using Markov's inequality. Finally, the last equality follows because $a^i_k$'s are i.i.d. random variables (independent since they were computed using a decoupling procedure, and identically distributed since the probability of randomization $q$ is common for all agents for any given $k$). Finally, in the last inequality $\operatorname{D}(x||y):= x\ln{x/y} + (1-x)\ln{(1-x)/(1-y)}$ denotes the Kullback-Liebler divergence between independent Bernoulli distributed random variables distributed with parameters $x$ and $y$. Next, we observe that $\operatorname{D}(x||y) = 0$ if and only if $x=y$. For our case, this would then imply that $\alpha = q$, which can happen if and only if $\bar{\mathcal{C}} = 0$, which is not possible as a result of the constraint on $R(\lambda)$. Hence, $\operatorname{D}(x||y) >0$. Finally, by \eqref{Cost_comparison} and \eqref{derivation}, it follows that $J^{BS}_{\gamma} - J^{BS}_{\gamma_R} \leq J^{BS}_{\hat{\gamma}} - J^{BS}_{\gamma_R} \rightarrow 0$, exponentially fast, which completes the proof.

\section{Proof of Theorem \ref{Prop:Delta_High_deviation_bounded}}\label{appen_prop3}
%[Proof of Proposition \ref{Prop:Delta_High_deviation_bounded}]
    Let us start by defining two events $\tilde{\operatorname{E}}(\cdot)$ and $\hat{\operatorname{E}}(\cdot)$. The event $\tilde{\operatorname{E}}(x)$ is when any agent takes longer than $x$ time steps to re-enter the set $S^{\gamma}$ (which is the set of agents that need to be transmitted). The event $\hat{\operatorname{E}}(x)$ is when any agent takes longer than $x$ timesteps to be transmitted while in the set $S^{\gamma}$. Now, let us define the event $\bar{\operatorname{E}}(x)$ when any agent's AoI is larger than $x$ at any time instant. Then, we can deduce that
% \begin{align*}
%     \bar{\operatorname{E}}(x) \supseteq \big(\tilde{\operatorname{E}}(x) \cup \hat{\operatorname{E}}(x) \big),
% \end{align*}
% which upon using the union bound yields
% \begin{align} \label{eq:P_E_k}
%     \PP(\bar{\operatorname{E}}(x)) \geq \PP \big(\tilde{\operatorname{E}}(x) \cup \hat{\operatorname{E}}(k) \big) \geq  \min(\PP (\tilde{\operatorname{E}}(x)), \PP (\hat{\operatorname{E}}(x))).
% \end{align}
\begin{align*}
    \bar{\operatorname{E}}^c(x) & \supseteq \big(\tilde{\operatorname{E}}^c(x/2) \cap\hat{\operatorname{E}}^c(x/2) \big), \Leftrightarrow \\
    \bar{\operatorname{E}}(x) &\subset \big(\tilde{\operatorname{E}}(x/2) \cup\hat{\operatorname{E}}(x/2) \big),
\end{align*}
using which we deduce
\begin{align} \label{eq:P_E_k}
    \PP(\bar{\operatorname{E}}(x)) & \leq \PP \big(\tilde{\operatorname{E}}(x/2) \cup \hat{\operatorname{E}}(x/2) \big) \nonumber \\
    & \leq  \PP (\tilde{\operatorname{E}}(x/2)) + \PP (\hat{\operatorname{E}}(x/2)).
\end{align}
First analyze the event $\tilde{\operatorname{E}}(x)$. 
% \noindent \textbf{\color{blue}Melih you can input your proof here as is. I will connect the two proofs after you're done.}
The probability $\PP (\tilde{\operatorname{E}}(x))$ can be upper bounded as 
\begin{align} \label{eq:P_t_E}
   \PP (\tilde{\operatorname{E}}(x)) \leq \PP \bigg(\sum_{k = 0}^{x-1} | S_{suc,k}^{\gamma}|\leq N-\mathcal{C}-1 \bigg),
\end{align}
where $S_{suc,k}^{\gamma}$ denotes the set of agents that are in $S^{\gamma}_k$, and their updates are successfully transmitted at time $k$. In the worst case scenario, when $k\in [x]$, the number of agents whose updates are successfully transmitted is less than $N-\mathcal{C}-1$ so that in the $x^{th}$ time, the $i^{th}$ agent is still not present in the set $S^{\gamma}_x$. We note that $| S_{suc,k}^{\gamma}|$ is a Binomially distributed random variable with number of trials $\mathcal{C}$, and success probability $1-p$, i.e., $|S_{suc,k}^{\gamma}|\sim Bin(\mathcal{C}, 1-p )$, and $\PP(|S_{suc,k}^{\gamma}| = \ell) = {\mathcal{C} \choose \ell} (1-p)^{\ell} p^{\mathcal{C}-\ell}$, for $\ell\in[\mathcal{C}]$. Next, by using the fact that the sum of independent binomial distributions with the same success probabilities is also a Binomial distribution, we have that $\sum_{k = 0}^{x-1} | S_{suc,k}^{\gamma}| \sim Bin(x\mathcal{C}, 1-p)$. The probability of $\PP (\sum_{k = 0}^{x-1} | S_{suc,k}^{\gamma}|\leq N-\mathcal{C}-1 )$ can then be rewritten as
\begin{align}\label{probability_1}
    \!\!\PP \left( \frac{\sum_{x = 0}^{x-1} | S_{suc,k}^{\gamma}| - x \mathcal{C} (1-p)}{\sqrt{x\mathcal{C} p(1-p)}} \leq c_{x,N} \right), \!\!  
\end{align} 
where  
\begin{align*}
%c_{x,N}  =\frac{N-\mathcal{C}-1 - x \mathcal{C} (1-p) }{\sqrt{x\mathcal{C} p (1-p)}} \\
c_{x,N} = \frac{\sqrt{N}(1 - (1+x(1-p))\alpha) }{\sqrt{x \alpha p (1-p)}} - \frac{1}{\sqrt{x \alpha N p (1-p)}}. 
\end{align*}
Next, given $x > \big(\frac{2}{\alpha (1-p)}\big)^2$, we can obtain an upper bound on $c_{x,N}$ as:
\begin{align}
    c_{x,N} & \leq \frac{\sqrt{N}((1 - \alpha) - \alpha (1-p) x) }{\sqrt{x \alpha p (1-p)}} \nonumber \\
    & = \sqrt{\frac{N}{\alpha p (1-p)}} \bigg(\frac{1 - \alpha}{\sqrt{x }} - \alpha (1-p) \sqrt{x } \bigg) \nonumber \\
    & \leq \sqrt{\frac{N}{\alpha p (1-p)}} \bigg(\frac{(1- \alpha)\alpha (1-p)}{2} - 2  \bigg) \nonumber \\
    & \leq - \sqrt{\frac{N}{\alpha p (1-p)}} =: \bar{c}_N. \label{eq:c_N}
\end{align}
%\approx \sqrt{\frac{R_d}{kp(1-p)}}(\alpha-(k+1)p) $. 
Then, by using the central limit theorem, the distribution of $\frac{\sum_{k = 0}^{x-1} | S_{suc,k}^{\gamma}| - x \mathcal{C} (1-p)}{\sqrt{x\mathcal{C} p(1-p)}}$ converges to the standard Gaussian distribution, as the number $x\mathcal{C}$ gets large. Thus, by using the Berry–Esseen theorem \cite{berry_essen}, we have that 
\begin{align} \label{eq:Berry_Esseen}
   \left|\PP \left( \frac{\sum_{k = 0}^{x-1} | S_{suc,k}^{\gamma}| - x \mathcal{C} (1-p)}{\sqrt{x\mathcal{C} p(1-p)}} \leq c_{x,N} \right) - \Upphi(c_{k,N})\right|\leq \epsilon_1,
\end{align}
where $\Upphi(c_{x,N})=\frac{1}{\sqrt{2\pi}} \int_{-\infty}^{c_{x,N}} e^{-\frac{z^2}{2}} dz$ is the CDF of the standard Gaussian distribution and $\epsilon_1 \leq 0.33554 \frac{1-p +0.415}{\sqrt{x\mathcal{C}}}$ \cite[Theorem~2]{berry_essen}. Now, let us define two random variables:
\begin{align*}
    Z_N(x) & = \frac{\sum_{k = 0}^{x-1} | S_{suc,k}^{\gamma}| - x \mathcal{C} (1-p)}{\sqrt{x\mathcal{C} p(1-p)}}, \text{ and }\\
    Z_\infty(x) & = \lim_{N \rightarrow \infty} Z_N(x).
\end{align*}
Then, we know that the CDF of  $Z_\infty(x)$ is given by  $\Upphi(\cdot)$. Using \eqref{eq:P_t_E} and the definitions of $Z_N(x)$ and $Z_\infty(x)$, we get
\begin{align*}
    \PP (\tilde{\operatorname{E}}(x)) & \leq \PP(Z_N(x) \leq c_{x,N}) \\
    & \leq  \big\lvert \PP(Z_N(x) \leq c_{x,N}) - \PP(Z_\infty(x) \leq c_{x,N}) \big\rvert \\
    & \hspace{2cm} + \PP(Z_\infty(x) \leq c_{x,N}) \\
    & \leq 0.33554 \frac{1-p +0.415}{\sqrt{x\alpha N}} + \Upphi(c_{x,N}) \\
    & \leq 0.3354 \frac{1-p +0.415}{\sqrt{\alpha N}} + \Upphi(\bar{c}_{N}),
\end{align*}
where the third inequality follows from \eqref{eq:Berry_Esseen} and the fourth inequality follows from the fact that if $x > \big(\frac{2}{\alpha (1-p)}\big)^2$ then $c_{x,N} < \bar{c}_N$ using \eqref{eq:c_N}, which implies that $\Upphi(c_{x,N}) < \Upphi(\bar{c}_N)$ due to the monotonically increasing nature of $\Upphi(\cdot)$. Let us choose $N$ such that $0.3354 \frac{1-p +0.415}{\sqrt{\alpha N}} \leq \delta/4$ and $\Phi(\bar{c}_N) \leq \delta/4$. Then,
\begin{align} \label{eq:P_t_E_UB}
    \PP (\tilde{\operatorname{E}}(x)) \leq \delta/2.
\end{align}
Notice that the conditions on $N$ suggest a lower bound on $x$ which is independent of $N$ but dependent on $\delta$.

Next, we determine  $\PP (\hat{\operatorname{E}}^i(x))$, where $\hat{\operatorname{E}}^i(x)$ is the event that agent $i$ takes longer than $x$ timesteps to be transmitted while in the set $S^{\gamma}$. The former can be computed as:
\begin{align*}
    \PP (\hat{\operatorname{E}}^i(x)) = (1-p) \sum_{s=x}^\infty p^s = (1-p) \bigg( \frac{p^x}{1-p} \bigg) = p^x.
\end{align*}
Hence, if we choose $x \geq \log(2/\delta)/\log(1/p)$, then
\begin{align} \label{eq:P_h_E_UB}
    \PP (\hat{\operatorname{E}}^i(x)) \leq \delta/2,
\end{align}
Combining \eqref{eq:P_E_k}, \eqref{eq:P_t_E_UB} and \eqref{eq:P_h_E_UB} we get
\begin{align*}
     \PP (\bar{\operatorname{E}}^i(x)) \leq \delta
\end{align*}
for agent $i$, given that $N$ is chosen such that $0.3354 \frac{1-p +0.415}{\sqrt{\alpha N}} \leq \delta/4$ and $\Upphi(\bar{c}_N) \leq \delta/4$ and $x = \Os(\log(1/\delta))$. This then completes the proof of the theorem.

% Next, to determine  $\PP (\hat{\operatorname{E}}(x))$, we first define the event $\hat{\operatorname{E}}^i(x)$ which is when agent $i$ takes longer than $x$ timesteps to be transmitted while in the set $S^{\gamma}$. By definition $\hat{\operatorname{E}}(x) = \cap_{i=1}^N \hat{\operatorname{E}}^i(x)$ and so $\PP(\hat{\operatorname{E}}(x)) = \PP(\cap_{i=1}^N \hat{\operatorname{E}}^i(x))$. The probability $\PP (\hat{\operatorname{E}}^i(x))$ is quite easy to attain,
% \begin{align*}
%     \PP (\hat{\operatorname{E}}^i(x)) = (1-p) \sum_{s=x}^\infty p^s = (1-p) \bigg( \frac{p^x}{1-p} \bigg) = p^x.
% \end{align*}
% Hence, if we choose $x \geq \log(2/\delta)/\log(1/p)$ then $\PP (\hat{\operatorname{E}}^i(x)) \leq \delta/2$. Now using union bound we can determine that if $x \geq \log(2N/\delta)/\log(1/p)$ then 
% \begin{align} \label{eq:P_h_E_UB}
%     \PP (\hat{\operatorname{E}}(x)) \leq \delta/2
% \end{align}
% Combining \eqref{eq:P_E_k}, \eqref{eq:P_t_E_UB} and \eqref{eq:P_h_E_UB} we get
% \begin{align*}
%      \PP (\bar{\operatorname{E}}(x)) \leq \delta
% \end{align*}
% given $N$ is chosen such that $\frac{1-p +0.415}{\sqrt{\alpha N}} \leq \delta/4$ and $\Upphi(\bar{c}_N) \leq \delta/4$ and $x = \Os(\log(N/\delta))$.

\end{appendices}

\end{document}